\newcounter{hours}\newcounter{minutes}
\newcommand\printtime{\setcounter{hours}{\time/60}%
	\setcounter{minutes}{\time-\value{hours}*60}%
	\thehours\,h\,\theminutes}
\newcommand\dateandtime{\today\quad\printtime}
\newcommand{\bra}[1]{\langle #1|}
\newcommand{\ket}[1]{|#1\rangle}
\newcommand{\cent}[0]{\mbox{\textcent}}
\newcommand{\dollar}[0]{\$}
\newcommand{\comment}[1]{}
\newcommand{\M}[0]{{\cal M}}
\begin{document}

\title{Automata and Quantum Computing}

\author{Andris Ambainis
~~and Abuzer Yakary{\i}lmaz\thanks{A. Yakary{\i}lmaz worked on the chapter when he was in Bo\u{g}azi\c{c}i University in Turkey, University of Latvia, and LNCC in Brazil.}}

\markboth{A. Ambainis, A. Yakary{\i}lmaz}{Automata and Quantum Computing}

\address{University of Latvia, Faculty of Computing, Raina bulv. 19, R\={\i}ga, LV-1586, Latvia \\
  emails:\,\url{ambainis@lu.lv} and  \,\url{abuzer@lu.lv}
  \\[4mm]\upshape{\dateandtime}}


\maketitle


\begin{abstract}
Quantum computing is a new model of computation, based on quantum physics. 
Quantum computers can be exponentially faster than
conventional computers for problems such as factoring.
Besides full-scale quantum computers, more restricted 
models such as quantum versions of finite automata have been studied. 
In this paper, we survey various models of quantum finite automata and their properties.  
We also provide some open questions and new directions for researchers.
\end{abstract}

\begin{classification}
 68Q10, 68Q12, 68Q15, 68Q19,68Q45
\end{classification}

\begin{keywords}
 quantum finite automata, probabilistic finite automata, nondeterminism, bounded error, unbounded error,
 state complexity, decidability and undecidability, computational complexity\index{quantum automaton}\index{automaton!quantum}\index{probabilistic finite automaton}\index{automaton!probabilistic}
\end{keywords}

\localtableofcontents

%
%

\section{Introduction}

Quantum computing\index{quantum computation} combines quantum physics and computer science, by studying computational
models based on quantum physics (which is substantially different from conventional physics)
and building quantum devices which implement those models. 
If a quantum computer is built, it will be able to solve
certain computational problems much faster than conventional computers.

The best known examples of such problems are
factoring and discrete logarithm. These two number theoretic problems
are thought to be very difficult for conventional computers but can be solved efficiently
(in polynomial time) on a quantum computer \cite{Sh94}. 
Since several widely used cryptosystems 
(such as RSA and Diffie-Hellman) are based on the difficulty of factoring or discrete logarithm,
a quantum computer would be able to break those cryptosystems, shaking up the foundations of cryptography.

Another, equally surprising discovery was made in 1996, by Lov Grover \cite{Gr96} who designed a quantum
algorithm that solves a general exhaustive search problem with $ N $ 
possible solutions in time $ O(\sqrt{N}) $.
This provides a quadratic speedup for a range of search problems, from problems
that are solvable in polynomial time classically to NP-complete problems.

Many other quantum algorithms have been discovered since then. (More information can be found in surveys \cite{BD10,Mo10} and the ``Quantum Algorithm Zoo" website \cite{Jo15}.)

Given that finite automata are one of the most basic models of computation, 
it is natural to study them in the quantum setting. Soon after the discovery of Shor's factoring algorithm \cite{Sh94}, the first models of quantum finite automata 
 (QFAs) appeared \cite{KW97,MC00}. A number of different models
and questions about the power of QFAs and their properties have been studied since then.

In this chapter, we cover most of this work. We particularly focus on the results which show advantages of QFAs over their classical\footnote{In the context of quantum computing, ``classical" means ``non-quantum". For finite automata, this usually means deterministic or probabilistic automaton.} counterparts because those results show how ``quantumness" adds power to the computational models.  

We note that some of early research on QFAs also claimed that, in some contexts, QFAs can be weaker than their classical counterparts. This was due to the first definitions of QFAs being too restricted \cite{Wat09B}. Quantum computation is a generalization of classical computation \cite{Wat09A} and QFAs should be able to simulate classical finite automata, if we define QFAs in sufficiently general way. Therefore, we particularly emphasize the most general model of QFAs that fully reflect the power of quantum computation.

We begin with an introductory section (Sec. \ref{sec:quantum-computation}) on basics of quantum computation for  readers who are not familiar with it. Then, we give the basic notation and conventions used throughout this chapter 
(Sec. \ref{sec:preliminaries}). After that, Sec. \ref{sec:1QFAs} presents the main results on 1-way QFAs and Sec. \ref{sec:2QFAs} presents the main results on 2-way QFAs. Each of those sections covers different models of QFAs that have been proposed, the classes of languages that they recognize, 
the state complexity in comparison to the corresponding classical models, and decidability and undecidability results. In Sec. \ref{sec:other-models}, we describe the results about QFAs in less conventional models or settings (for example, interactive proof systems with a QFA verifier or QFAs  
augmented with extra resources beyond the usual quantum model).  We conclude with a discussion of directions for future research in Sec. \ref{sec:conclusion}.

We also refer the reader to \cite{SayY14A} for an introductory paper on quantum automata and to \cite{QLMG12} for another survey on quantum automata, and to 
\begin{center}
http://publication.wikia.com/wiki/Quantum\_automata
\end{center}
for a list of published papers on quantum automata.

\section{Mathematical background} \label{sec:quantum-computation}

In this section, we review the basics of quantum computation.\index{quantum computation}
We refer the reader to \cite{NC00} for more information.

{\bf Quantum systems.}\index{quantum system}
The simplest way towards understanding the quantum model is by thinking of  as a generalization of probabilistic systems.
If we have a probabilistic system with $m$ possible states $1, 2, \ldots, m$, we can describe it by a probability distribution $p_1, \ldots, p_m$ over those $m$ possibilities. The probabilities $p_i$ must be nonnegative real numbers and satisfy $p_1+\cdots+p_m=1$. In the quantum case,
the probabilities $p_1, \ldots, p_m$ are replaced by amplitudes $\alpha_1, \ldots, \alpha_m$. The amplitudes can be complex numbers and must satisfy $|\alpha_1|^2+\cdots+|\alpha_m|^2=1$. 

More formally, let us consider a quantum systems with 
$m$ basis states
(for some finite $m$) which we denote by $\ket{q_1}, \ket{q_2}, \ldots, \ket{q_m}$. 
A state of such a system is a linear combination of basis states with complex coefficients (called {\em amplitudes})\index{amplitude}
\begin{equation}
\label{eq:superpose} 
\ket{\psi}=\alpha_1\ket{q_1}+\alpha_2\ket{q_2}+\cdots+\alpha_m\ket{q_m}
\end{equation}
that must satisfy $|\alpha_1|^2+\cdots+|\alpha_m|^2=1$. 
We say that $\ket{\psi}$ is
a {\em superposition}\index{superposition} of $\ket{q_1}$, $\ldots$, $\ket{q_m}$.
For example, if we have a system with 2 basis states $\ket{0}$ and $\ket{1}$, 
some of the possible superpositions are 
$\frac{4}{5} \ket{0}+\frac{3}{5} \ket{1}$, $\frac{4}{5} \ket{0}-\frac{3}{5} \ket{1}$, and $\frac{1}{\sqrt{2}} \ket{0}+\frac{1}{\sqrt{2}} \ket{1}$.

We can view a \index{quantum state} $\ket{\psi}$ as a vector consisting of amplitudes:
\[ \ket{\psi}= \left( \begin{array}{c} \alpha_1 \\ \alpha_2 \\ \vdots \\ \alpha_m 
\end{array} \right) .\]
Then, the basis states $\ket{q_i}$ are vectors with 1 in the $i^{\rm th}$ component and
0 everywhere else and (\ref{eq:superpose}) can be interpreted as the addition of vectors. 
The length of the vector $\ket{\psi}$ is $\|\psi\|= \sqrt{|\alpha_1|^2+\cdots+|\alpha_m|^2}$. Thus, in vector language, the requirement that 
$|\alpha_1|^2+\cdots+|\alpha_m|^2=1$ is equivalent to just saying that $\|\psi\|=1$.
That is, a quantum state is a vector of length 1.

{\bf Unitary transformations.}\index{unitary transformation}
A transformation on a quantum state is specified by a transformation matrix $U$. If the state before the transformation is $\ket{\psi}$, the state after the transformation is $U\ket{\psi}$. A transformation is valid (allowed by the rules of quantum physics) if and only if $\|\psi\|=1$ implies $\|U\ket{\psi}\|=1$. Transformation matrices which satisfy this constraint are called {\em unitary}.

A transformation $U$ can be also specified by describing $U\ket{q_1}, \ldots, U\ket{q_m}$. Then, for any $\ket{\psi}=\alpha_1\ket{q_1}+\alpha_2\ket{q_2}+\cdots+\alpha_m\ket{q_m}$, we have 
\[ U\ket{\psi}=\alpha_1 U\ket{q_1}+ \alpha_2 U\ket{q_2} + \cdots + \alpha_m U\ket{q_m} .\] 
For example, if we have a system with 2 basis states $\ket{0}$ and $\ket{1}$, 
we can specify a transformation $H$ by saying that $H$ maps
\begin{equation} 
\label{eq:had}
\ket{0} \rightarrow \frac{1}{\sqrt{2}} \ket{0}+\frac{1}{\sqrt{2}} \ket{1} \mbox{~~and~~} \ket{1} \rightarrow \frac{1}{\sqrt{2}} \ket{0}-\frac{1}{\sqrt{2}} \ket{1} .
\end{equation}
This determines how $H$ acts on superpositions of $\ket{0}$ and $\ket{1}$, 
For example, (\ref{eq:had}) implies that $H$ maps $\frac{4}{5}\ket{0}-\frac{3}{5}\ket{1}$ to
\[ \frac{4}{5} \left( \frac{1}{\sqrt{2}} \ket{0}+\frac{1}{\sqrt{2}} \ket{1} \right) - \frac{3}{5} \left( \frac{1}{\sqrt{2}} \ket{0}-\frac{1}{\sqrt{2}} \ket{1} \right) 
= \frac{7}{5\sqrt{2}} \ket{0}+\frac{1}{5\sqrt{2}} \ket{1} .\]

{\bf Measurements.}\index{quantum measurement}
To obtain information about a quantum state, we have to measure it.
The simplest measurement is observing 
$ \ket{ \psi} =\alpha_1\ket{q_1}+\alpha_2\ket{q_2}+\cdots+\alpha_m\ket{q_m}$
with respect to $\ket{q_1}, \ldots, \ket{q_m}$.
It gives $\ket{q_j}$ with probability $ | \alpha_j | ^2$.
($\|\psi\|=1$ guarantees that probabilities of different outcomes sum to 1.)
After the measurement, the state of the system changes to $\ket{q_j}$
and repeating the measurement gives the same state $\ket{q_j}$.
For example, observing $\frac{4}{5} \ket{0}+\frac{3}{5} \ket{1}$
gives 0 with probability $(\frac{4}{5})^2 =\frac{16}{25}$ and 1 with probability $(\frac{3}{5})^2 =\frac{9}{25}$.

{\bf Partial measurement.} 
In the context of QFAs, it may be the case that we only need to know whether the state $q_i$ is an accepting state or not. In this case, we can perform a partial measurement. 
Let $Q_1, \ldots, Q_k$ be a partition of $\{q_1, q_2, \ldots, q_m\}$ into disjoint subsets. Then, measuring a state $ \ket{ \psi} =\alpha_1\ket{q_1}+\alpha_2\ket{q_2}+\cdots+\alpha_m\ket{q_m}$ with respect to this partition gives result $Q_i$ with probability
$p_i = \sum_{q_j\in Q_i} |\alpha_j|^2$ and the state after the measurement is 
\[ \sum_{q_j\in Q_i} \frac{\alpha_j}{\sqrt{p_i}} \ket{q_j} .\]
For example, if the quantum state is $\frac{1}{2}\ket{1}+\frac{1}{2}\ket{2}+\frac{1}{2}\ket{3}+\frac{1}{2}\ket{4}$ and the partition is $Q_1=\{1, 2\} $ and $ Q_2=\{3, 4\}$, a partial measurement would give the result $Q_1$ with probability 
$(\frac{1}{2})^2+(\frac{1}{2})^2 = \frac{1}{2}$ and that state after the measurement is
\[ \frac{1}{\sqrt{2}} \ket{1} + \frac{1}{\sqrt{2}} \ket{2} .\]
Such a measurement tells whether a QFA accepts a string and, at the same time, preserves
the part of the quantum state which consists of accepting states (or the part of the quantum state which consists of nonaccepting states).

{\bf Dirac notation.}\index{Dirac notation}
As already mentioned above, we can view a quantum state $\ket{\psi}$ as a vector consisting of amplitudes.
$\bra{\psi}$ denotes the conjugate transpose of this vector:
$ \bra{\psi}= \left( \begin{array}{cccc} \alpha^*_1 & \alpha^*_2 & \cdots & \alpha^*_m
\end{array} \right) ,$
where $\alpha^*_i$ denotes the conjugate transpose of the complex number $\alpha_i$.
(If $\alpha_i$ is real, then $\alpha^*_i=\alpha_i$.)
If we multiply $ \ket{\psi} $ with $\bra{\psi}$ (according to the usual rules
for matrix multiplication), we get a $m\times m$ matrix:
\[ \ket{\psi} \bra{\psi} = \left( \begin{array}{cccc} 
\alpha_1 \alpha^*_1 & \alpha_1 \alpha^*_2 & \cdots & \alpha_1 \alpha^*_m \\
\alpha_2 \alpha^*_1 & \alpha_2 \alpha^*_2 & \cdots & \alpha_2 \alpha^*_m \\
\vdots & \vdots & \ddots & \vdots \\
\alpha_m \alpha^*_1 & \alpha_m \alpha^*_2 & \cdots & \alpha_m \alpha^*_m \\
\end{array} \right) .\] 
This is {\em the density matrix}\index{density matrix}\index{matrix!density} of the state $\psi$. 

{\bf Mixed states.} 
A {\em mixed state (or mixture)} $(p_j, \ket{\psi_j})$ 
is a probabilistic combination of several quantum states $\ket{\psi_j}$,
with probabilities $p_j$ (where $p_{j} \geq 0$ for all $j$ and $\sum_j p_j =1$).
For such a state, its density matrix is just the sum of density matrices 
of $\ket{\psi_j}$, weighted by their respective probabilities:
\[ \rho = \sum_j p_j \ket{\psi_j} \bra{\psi_j} .\]
If we measure a mixed state $(p_j, \ket{\psi_j})$,
the probabilities of different measurement outcomes can be calculated from
the density matrix $\rho$. Thus, $\rho$ provides a complete description of a mixed state:
there may be multiple decompositions $(p_j, \ket{\psi_j})$ that give the same matrix $\rho$ but
they are all equivalent with respect to any measurement that we may perform.

{\bf Superoperators.}\index{quantum superoperator}
In general, we can perform a sequence of unitary transformations and measurements
on a quantum state, with each transformation possibly depending on the results of the previous 
measurements. Such sequence is called a {\em completely positive superoperator} or CPSO. 

Alternatively, a completely positive superoperator can be described by 
a sequence of $ m \times m $ matrices $ \mathcal{E}=\{E_1, \ldots, E_k \}$ (called {\em Kraus operators}) 
such that $\sum_{i=1}^k E_i^{\dagger} E_i  = I$. Such a CPSO maps a mixed state with the density matrix
$\rho$ to a mixed state with the density matrix 
$\rho'= \mathcal{E}(\rho) = \sum_{i=1}^k E_i \rho E_i^{\dagger}$.

The two definitions are equivalent: for any sequence of unitary transformations and measurements,
there is a set of Kraus operators 
$E_1, \ldots, E_k$ which produces the same result and the other way around.
A bistochastic quantum operation, 
say $ \mathcal{E} = \{E_1,\ldots,E_m\} $, is a special kind of superoperator satisfying both  $\sum_{i=1}^k E^\dagger_i E_i  = I$ and  $\sum_{i=1}^k E_i E^{\dagger}_i  = I$. 

\section{Preliminaries} 
\label{sec:preliminaries}

\noindent \textbf{Basic notations:} Throughout the chapter,
\begin{itemize}
	\item $ \Sigma $ is the input alphabet not containing the end markers $ \cent $ and $ \dollar $
		and $ \tilde{\Sigma} = \Sigma \cup \{ \cent, \dollar \} $.
		\item For a given string $ w $, $ |w| $ is the length of $ w $, 
		$ w_{i} $ is the $ i^{th} $ symbol of $ w $, and $ \tilde{w} $ represents the string $ \cent w \dollar $.
		\item $ Q $ is the set of internal states,	where $ q_{1} $ is the initial state.
		$ Q_{a} \subseteq Q $ is the set of accepting states.
	 \item $ f_{\mathcal{M}}(w) $ is the accepting probability 
(or the accepting value) 
		of $ \mathcal{M} $ on the string $ w $.
	\item For a given vector (row or column) $ v $, $ v[i] $ is the $ i^{th} $ entry of $ v $.	
	\item For a given matrix $ A $, $ A[i,j] $ is the $ (i,j)^{th} $ entry of $ A $.	
\end{itemize}

~\\
\noindent 
\textbf{Language recognition:}
Let $ \mathcal{M} $ be a machine and $ \lambda \in \mathbb{R} $.
The language $ L \subseteq \Sigma^{*} $ recognized by $ \mathcal{M} $ with \textit{(strict) cutpoint} 
or \textit{nonstrict cutpoint} 
$ \lambda $ is defined as
\begin{equation*}
      L = \{ w \in \Sigma^{*} \mid f_{\mathcal{M}}(w) > \lambda \} \mbox{ or } L = \{ w \in \Sigma^{*} \mid f_{\mathcal{M}}(w) \geq \lambda \} \mbox{, respectively.}
\end{equation*}
The language $ L \subseteq \Sigma^{*} $ is said to be recognized by $ \mathcal{M} $  with\textit{unbounded error}\index{unbounded error}\index{error!unbounded} if there exists a cutpoint  $ \lambda $ such that $ L $ is recognized by $ \mathcal{M} $ with strict or nonstrict cutpoint $ \lambda $.

In the followings, we assume that $ f_{\mathcal{M}}(w) \in [0,1] $  $ \forall w \in \Sigma^{*} $.
The language $ L \subseteq \Sigma^{*} $ recognized by $ \mathcal{M} $ with \textit{positive or negative  one-sided unbounded error} 
is defined as
\begin{equation*}
      L = \{ w \in \Sigma^{*} \mid f_{\mathcal{M}}(w) > 0 \} \mbox{ or }  L = \{ w \in \Sigma^{*} \mid f_{\mathcal{M}}(w) = 1 \}  \mbox{, respectively.}
\end{equation*}
The language $ L \subseteq \Sigma^{*} $ is said to be recognized by $ \mathcal{M} $ with \textit{error bound} $ \epsilon $ ($ 0 \le \epsilon < \frac{1}{2} $) if (i) $ f_{\mathcal{M}}(w) \ge 1 - \epsilon $ when $ w \in L $ and (ii) $ f_{\mathcal{M}}(w) \leq  \epsilon $ when $ w \notin L $. This notion is also known as recognition with \textit{bounded error}.\index{bounded error}\index{error!bounded} Moreover, in the case of \textit{positive one-sided bounded error}, 
$ f_{\mathcal{M}}(w) = 0 $ when $ w \notin L $; and, in the case of \textit{negative one-sided bounded error}, $ f_{\mathcal{M}}(w) = 1 $ when $ w \in L $.

~\\
\noindent
\textbf{Transition amplitudes and probabilities:}
In both probabilistic and quantum finite automata \cite{Ra63,Pa71,KW97}, 
the transition values (probabilities or amplitudes)\index{amplitude}
are traditionally allowed to be in $ \mathbb{R} $ and in $ \mathbb{C} $, respectively.
On the other hand, the transition values of Turing machines (TMs) \cite{FK94,BV97,Wat03} 
are often selected from the restricted subsets of $ \mathbb{R} $ or $ \mathbb{C} $.
For example, for probabilistic Turing machines, it is often assumed that, at each step,
there are two possible choices and the machine chooses each of them with probability 1/2.

In this chapter, we assume the most general model possible.
That is, unless specified otherwise, the transition values of probabilistic (or quantum) machines
are supposed to be in $ \mathbb{R} $ (or $ \mathbb{C} $). 
We use $ \dddot{\mathbb{R}}$, $ \tilde{\mathbb{C}} $, and $ \mathbb{A} $ to denote the computable real numbers, 
efficiently computable complex numbers, 
and algebraic numbers, respectively \cite{BV97,Wat03}. 


~\\ \noindent
\textbf{Classical finite automata:}
We will compare quantum automata with following models of classical automata:
\begin{itemize}
\item
deterministic finite automaton 
(DFA),
\item
nondeterministic finite automaton (NFA), 
\item
probabilistic finite automaton\index{probabilistic finite automaton} \index{automaton!probabilistic} (PFA), and,
\item
generalized finite automaton\index{generalized finite automaton}\index{automaton!generalized} (GFA) of Turakainen \cite{Tur69}.
\end{itemize}
The first three models are quite widely known. Each of them can be studied both in 1-way version (where the head of the automaton moves from the left to the right) 
and in 2-way version (where the automaton is allowed to move in both direction).
We refer to 1-way models as 1DFA, 1NFA and 1PFA and to 2-way models as 2DFA, 2NFA and 2PFA.\index{two-way automaton}\index{automaton!two-way}

A 1-way probabilistic finite automaton\index{probabilistic finite automaton}\index{automaton!probabilistic} (1PFA) \cite{Ra63} can be described by a 5-tuple $ \mathcal{P}=(Q,\Sigma,\{ A_{\sigma} \mid \sigma \in \tilde{\Sigma} \},q_{1},Q_{a}), $ where $ A_{\sigma} $ is the transition matrix,  i.e. $ A_{\sigma}[j,i] $ is the probability of the transition from state $ q_{i} $ to state $ q_{j} $ when reading symbol $ \sigma $.
We require that all $A_{\sigma}$ are stochastic:\index{stochastic matrix} $A_{\sigma}[j,i]\geq 0$ and,
for any $i$, $\sum_j A_{\sigma}[j,i]=1$.

The computation of a 1PFA can be traced by a probability vector $ v $ in which $ v[i] $ is the probability of being in state $ q_{i} $. For a given input string $ w \in \Sigma^{*} $, $ \tilde{w} = \cent w \dollar $ is read symbol by symbol, $  v_{i} = A_{\tilde{w}_{i}} v_{i-1} $, where $ 1 \le i \le | \tilde{w} | $ and $ v_{0} $ is the initial state vector whose first entry is equal to 1. 
The acceptance probability of $ \mathcal{P} $ on string $w$ is defined as 
\begin{equation*}
      f_{\mathcal{P}}(w) = 
              \sum_{q_{i} \in Q_{a}} v_{|\tilde{w}|}[i].
\end{equation*}

If we allow the transition matrices $A_{\sigma}$ to be arbitrary matrices consisting of   arbitrary real numbers, we obtain a generalized finite automaton\index{generalized finite automaton}\index{automaton!generalized} (GFA) \cite{Tur69}. Then, the range of $ f_{\mathcal{G}}(\cdot) $ is real numbers and it is called \textit{accepting value} instead of accepting probability. 

1DFAs, 1NFAs, 2DFAs, 2NFAs and 1PFAs with bounded error all recognize the same class of languages: \textit{regular languages} $ \mathsf{REG} $. 
On the other hand, 2PFAs can recognize some nonregular languages \cite{Fr81} with bounded error, such as $  \mathtt{UPAL}=\{a^{n}b^{n} \mid n>0 \} $, but requires exponential expected runtime \cite{DS90}. 

The languages recognized by 1PFAs with cutpoint (nonstrict cutpoint) 
form the class of \textit{stochastic languages}\index{stochastic language}\index{language!stochastic} (\textit{co-stochastic languages}), denoted as $ \mathsf{S} $ ($ \mathsf{coS} $).  
$ \mathsf{S}  \cup \mathsf{coS} $ forms the class of \textit{unbounded-error stochastic languages} 
($ \mathsf{uS} $). 
1GFAs are equivalent to 1PFAs: the classes of languages recognized by 1GFAs with cutpoint  and nonstrict cutpoint are also $ \mathsf{S} $ and $ \mathsf{coS} $, respectively \cite{Tur69}. This equivalence
makes 1GFAs useful for proving facts about 1PFA.
Moreover, any language $ L $, defined as $  L =  \{ w \mid f_{\mathcal{P}}(w) \neq \frac{1}{2} \} $ for a 1PFA  $ \mathcal{P} $, is called \textit{exclusive stochastic language} 
and $ \mathsf{S ^{\neq} } $ denotes the class of all such languages. The complement of $ \mathsf{S^{\neq}} $ is denoted $ \mathsf{S^{=}} $. The class $ \mathsf{S^{=}_\mathbb{Q}} $ is a subset of $ \mathsf{S^{=}} $ defined by 1PFAs with rational-valued transitions. The class $ \mathsf{S^{\neq}_\mathbb{Q}} $ is defined similarly.

\section{One-way QFAs}
\label{sec:1QFAs}

A number of different definitions of one-way QFAs\index{one-way quantum automaton}\index{quantum automaton!one-way}\index{automaton!one-way quantum} (1QFAs) have been proposed over the years.
They can all be described in a similar framework, by specifying a 5-tuple
\[ 
	(\Sigma,Q, \{T_{\sigma} \mid \sigma \in \hat{\Sigma} \}, q_1, R ),
\]
with the following specifications:
\begin{itemize}
\item
$Q$ is a finite set of (classical) states. The (quantum) state of a 1QFA can be any superposition\index{superposition} of basis states $ \{ \ket{q} \mid q \in Q \} $: $\ket{\psi}=\sum_{q\in Q} \alpha_q \ket{q}$.
In mixed state 
models, the 1QFA can also be in a mixture $(p_i, \ket{\psi_i})$ of such states.
\item
$\ket{\psi_{0}} =\ket{q_1} $ is the initial state of the 1QFA.
\item
For each symbol $ \sigma \in \tilde{\Sigma} $, we have a corresponding transformation $T_{\sigma}$ on 1QFA's current 
state. In simpler models, $T_{\sigma}$ is a unitary transformation, 
denoted $U_{\sigma}$. In more general models,
$T_{\sigma}$ can be a sequence of unitary transformations and measurements,\index{quantum measurement} with the next operations 
in the sequence depending on the previous ones.
\item
$R$ is a rule for determining how the 1QFA accepts the strings. Typically, 
$R$ is specified by a set of accepting states ($Q_{a}\subseteq Q$) and measuring the final 
state of the QFA in the standard basis. If an accepting state $ q\in Q_{a}$ is obtained,
the automaton accepts. Otherwise the automaton rejects.
\end{itemize}

The models differ in the set of transitions $T_\sigma$ and acceptance rules $R$ that are allowed.

{\bf Example.}
Let $p$ be an odd number.
Consider the following 1QFA $\mathcal{M}$ in one symbol alphabet\index{unary language}\index{language!unary} $\Sigma=\{a\}$. The set of states is $Q=\{q_1, q_2\}$ and the initial state is $\ket{\psi_{0}}= \ket{q_1}$. The transformations on the end-markers are identities and the transformation $U_a$ that corresponds to reading $a$ is defined by
\[
	\begin{array}{rcr}
		U_a |q_1\rangle & = & \cos\phi|q_1\rangle + \sin\phi|q_2\rangle
		\\
		U_a |q_2\rangle & = & -\sin\phi|q_1\rangle + \cos\phi|q_2\rangle
	\end{array}, \mbox{ where } \phi=\frac{2\pi}{p}.
\]
The acceptance rule $R$ is as follows: At the end of computation, we measure the state. If the result is $q_1$, we accept. Otherwise, we reject. (In other words, we have $Q_{a}=\{q_1\}$.)

The initial state is $\ket{q_1}$. After reading the first symbol $a$, the quantum state becomes
$ \ket{\psi_1}= \cos\phi|q_1\rangle + \sin\phi|q_2\rangle .$
After reading the second symbol $a$, it becomes 
\begin{eqnarray*}
	U_a\ket{\psi_1} & = & \cos \phi ~ U_a\ket{q_1} + \sin \phi ~ U_a\ket{q_2} 
	\\
	& = & \cos \phi ( \cos\phi \ket{q_1} + \sin\phi \ket{q_2} ) + \sin \phi ( -\sin\phi \ket{q_1} + \cos\phi \ket{q_2} )
	\\
	& = & \cos 2\phi \ket{q_1} + \sin 2 \phi \ket{q_2}.
\end{eqnarray*}
We can show that each next application of $U_a$ also rotates the state in the plane formed by $\ket{q_1}, \ket{q_2}$ by an angle of $\phi=\frac{2\pi}{p}$, as shown in Figure \ref{fig:2state-qfa}. Thus, we have

\begin{figure}[!htb]
\centering
\includegraphics[scale=0.3]{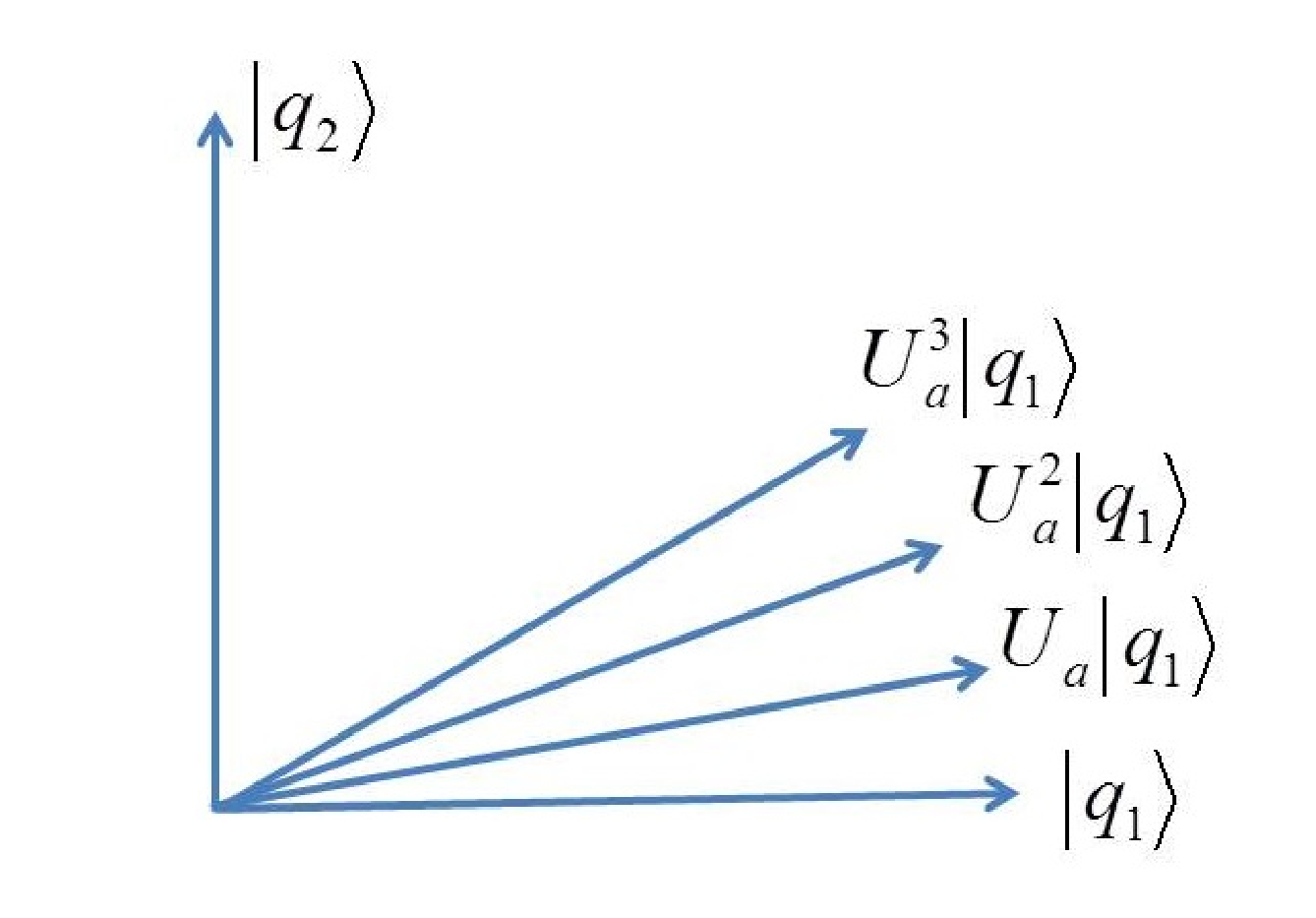}
\caption{2 state 1QFA $\mathcal{M}$}
\label{fig:2state-qfa}
\end{figure}

\begin{lemma} \cite{AN09}
	\label{lem:2-state-MCQFA}
	After reading $a^j$, the state of $\mathcal{M}$ is
	$ \cos\left(\frac{2\pi j}{p}\right) \ket{q_1}
		+\sin\left(\frac{2\pi j}{p}\right) \ket{q_2}.$
\end{lemma}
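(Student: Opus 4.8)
The plan is to prove the statement by induction on $j$, since the hard work — understanding a single application of $U_a$ — is exactly the content of the two explicit computations already carried out in the text for $j=1$ and $j=2$.

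\textbf{Base case.} For $j=0$, the state is $\ket{\psi_0}=\ket{q_1}=\cos 0\,\ket{q_1}+\sin 0\,\ket{q_2}$, matching the formula. (One could equally start at $j=1$, where the text has already checked that $\ket{\psi_1}=\cos\phi\,\ket{q_1}+\sin\phi\,\ket{q_2}$ with $\phi=\frac{2\pi}{p}$.)

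\textbf{Inductive step.} Assume that after reading $a^j$ the state is $\ket{\psi_j}=\cos\!\left(\frac{2\pi j}{p}\right)\ket{q_1}+\sin\!\left(\frac{2\pi j}{p}\right)\ket{q_2}$. Apply $U_a$ using linearity and the defining equations $U_a\ket{q_1}=\cos\phi\,\ket{q_1}+\sin\phi\,\ket{q_2}$ and $U_a\ket{q_2}=-\sin\phi\,\ket{q_1}+\cos\phi\,\ket{q_2}$. Writing $\theta_j=\frac{2\pi j}{p}$, this gives
\[
	U_a\ket{\psi_j}=\bigl(\cos\theta_j\cos\phi-\sin\theta_j\sin\phi\bigr)\ket{q_1}+\bigl(\cos\theta_j\sin\phi+\sin\theta_j\cos\phi\bigr)\ket{q_2}.
\]
Now the angle-addition identities $\cos(\theta_j+\phi)=\cos\theta_j\cos\phi-\sin\theta_j\sin\phi$ and $\sin(\theta_j+\phi)=\sin\theta_j\cos\phi+\cos\theta_j\sin\phi$ collapse this to $\cos\theta_{j+1}\,\ket{q_1}+\sin\theta_{j+1}\,\ket{q_2}$, since $\theta_j+\phi=\frac{2\pi j}{p}+\frac{2\pi}{p}=\frac{2\pi(j+1)}{p}=\theta_{j+1}$. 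This is precisely the claimed form for $a^{j+1}$, completing the induction.

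\textbf{On the main obstacle.} There is essentially no obstacle here: the only substantive observation is that $U_a$ is the standard $2\times 2$ rotation matrix by angle $\phi$, and rotations compose additively on angles — exactly the picture in Figure \ref{fig:2state-qfa}. The one point worth a sentence of care is that the end-marker transformations are identities, so reading $\cent a^j\dollar$ produces the same state as reading $a^j$; thus the lemma as stated (``after reading $a^j$'') is unambiguous and the final measurement, were one to take it, would accept with probability $\cos^2\!\left(\frac{2\pi j}{p}\right)$.
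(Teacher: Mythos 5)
Your proof is correct and follows the same route as the paper, which simply observes that each application of $U_a$ is a rotation by $\phi=\frac{2\pi}{p}$ in the $\ket{q_1}$--$\ket{q_2}$ plane (illustrated by the explicit $j=1,2$ computations and Figure \ref{fig:2state-qfa}); your induction with the angle-addition identities is just the formalization of that sketch. Nothing is missing.
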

Therefore, $\mathcal{M}$ accepts $a^j$ with probability $\cos^2 (\frac{2\pi j}{p})$. If $p$ divides $j$, $\cos^2(\frac{2\pi j}{p})=\cos^2 0 = 1$. Otherwise, $\cos^2(\frac{2\pi j}{p}) < 1$. So, for any odd $p>2$, $\M$ recognizes
\begin{equation}
	\label{eq:language:MOD-p}
	\mathtt{MOD_p} = \{a^i | \mbox{$p$ divides $i$}\}
\end{equation}
with negative one-sided error bound $ \cos^2(\frac{\pi}{p}) $.

{\bf Models.} We now describe some important 1QFA models in the order from the most restrictive to the most general. Let $w$ be the input string.
\begin{enumerate}
\item
\textit{Moore-Crutchfield quantum finite automaton}\index{quantum automaton!Moore-Crutchfield}\index{automaton!Moore-Crutchfield quantum} (MCQFA) 
\cite{MC00} is the most restricted of known QFA models.
In this model, the transformations $ T_\sigma $'s have to be unitary\index{unitary transformation} 
($U_{ \sigma }$).
The acceptance rule $R$ is of the form described above.
We measure the state of the QFA after reading $\tilde{w}$ and accept if the measurement\index{quantum measurement} gives $ q\in Q_{a}$.
\item
\textit{Kondacs-Watrous quantum finite automaton}\index{quantum automaton!Kondacs-Watrous}\index{automaton!Kondacs-Watrous quantum} (KWQFA) \cite{KW97}
is a model in which $ T_\sigma $'s still have to be unitary ($U_\sigma$) but the acceptance
rule $R$ involves measurements after every step. 
The state set $Q$ is partitioned into the set of accepting states 
$Q_{a}$, the set of rejecting states $Q_{r}$, and the set of non-halting states $Q_{n}$, i.e. $Q=Q_{a}\cup Q_{r}\cup Q_{n}$.
After reading each symbol $\tilde{w}_i$, we perform a partial measurement 
whether the state is in $Q_{a}$, $Q_{r}$, or $Q_{n}$.
If $Q_{a}$ ($ Q_r $) is obtained, the computation is terminated and the input string is accepted (rejected).
If $Q_{n}$ is obtained, the computation is continued by reading the next symbol $\tilde{w}_{i+1}$ when $ i < | \tilde{w} | $, and the computation is terminated and the input string is rejected when $ i=|\tilde{w}| $.
\item \textit{Latvian quantum finite automaton}\index{quantum automaton!Latvian}\index{automaton!Latvian quantum} (LaQFA) \cite{ABGKMT06} is a model in which each $T_\sigma$ can be a sequence $U_1$, $M_1$,  $\ldots$, $U_m$, $M_m$ consisting of unitary transformations $U_1, \ldots, U_m$ and measurements $M_1, \ldots, M_m$. It is required that the transformations in the sequence are independent of the outcomes of measurements $M_i$ ($ 1 \leq i \leq m $). The acceptance is by measuring the state of the QFA after reading $ \tilde{w} $ and accepting if the measurement gives $q\in Q_{a}$.
\item
\textit{Nayak quantum finite automaton}\index{quantum automaton!Nayak}\index{automaton!Nayak quantum} (NaQFA) \cite{Nay99} combines KWQFA and LaQFA. The transformations $T_\sigma$'s of NaQFA as for LaQFA, the acceptance rule $R$ is as for KWQFA.
\textit{Bistochastic quantum finite automaton}\index{quantum automaton!bistochastic}\index{automaton!bistochastic quantum} (BiQFA) \cite{GKK11} is a generalization of NaQFA such that the transformations $T_\sigma$'s can be bistochastic quantum operations
and the acceptance rule $R$ is as for KWQFA. 
\item
\textit{General one-way quantum finite automaton}\index{quantum automaton!general}\index{automaton!general quantum} (1QFA) \cite{Hir10,YS11A} allows each $T_\sigma$ to be 
a sequence $U_1, M_1$, $\ldots$, $U_m, M_m$ of unitary transformations and measurements. Moreover, each transformation $U_i$ or measurement $M_i$ can depend on the outcomes  of previous measurements $M_{i-1}$. This is the most general model. It has been discovered several times in different but 
equivalent forms: quantum finite automaton with ancilla qubits (QFA-A)\index{quantum automaton!with ancilla qubits}\index{automaton!quantum with ancilla qubits} \cite{Pa00}, fully quantum finite automaton (CiQFA)\index{quantum automaton!fully}\index{automaton!fully quantum} \cite{Cia01}, quantum finite automaton with control language\index{quantum automaton!with control language}\index{automaton!quantum with control language} (QFA-CL) \cite{BMP03A}, and one-way finite automaton with quantum and classical states\index{quantum automaton!with quantum and classical states}\index{automaton!with quantum and classical states} (1QCFA) \cite{ZQLG12}.
\end{enumerate}

Why do we have so many models? Initially, researchers did not recognize the power that comes from performing sequences of unitary transformations and measurements. For this reason, the first models of QFAs were defined in an unnecessarily restrictive form.

Which is the right model? Physically, we can perform any sequence of measurements 
and unitary transformations. 
Hence, 1QFAs should be the main model. 
Among the more restricted models, MCQFA and LaQFA can be motivated by the fact that 
measuring the quantum state during each transformation $T_a$ may be difficult. 
Because of that, it may be interesting to consider models in which measurements are restricted. 
One natural restriction is to allow only one measurement at the end of the computation, as in MCQFA. 
The other possibility is to allow intermediate measurements after each $T_a$, as long as the rest of computation does not depend on their outcomes. 
Such measurements are easier to realize than general measurements (for example, this is the case in {\em liquid state NMR} 
quantum computing \cite{NMR06}). 
This leads to LaQFA.

There is no compelling physical motivation behind KWQFAs and NaQFAs. These models allow to stop the computation depending on the result of an intermediate measurement. 
If we are able to do that, it is natural that we also allow the next transformations to depend on the measurement 
outcome - which leads to the most general model of 1QFAs.

\subsection{Simulations} 
\label{sec:simulations}

In this section, we present some basic simulation results that relate the power of 1QFA to the power of their classical counterparts. First, any probabilistic automaton can be transformed into an equivalent QFA, if the model of QFA is sufficiently general.

\begin{theorem} 
	\label{thm:1PFA-simulated-by-1QFA}
	For a given $ n $-state 1PFA $ \mathcal{P} $, there exists an $ n $-state 1QFA $ \mathcal{M} $ such that $ f_{\mathcal{P}}(w)=f_{\mathcal{M}}(w) $ for any $ w \in \Sigma^{*} $.
\end{theorem}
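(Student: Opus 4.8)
The plan is to simulate the 1PFA $\mathcal{P} = (Q,\Sigma,\{A_\sigma \mid \sigma \in \tilde{\Sigma}\},q_1,Q_a)$ by a general 1QFA $\mathcal{M}$ on the \emph{same} state set $Q$, encoding the classical probability distribution tracked by $\mathcal{P}$ as a diagonal density matrix of $\mathcal{M}$. The automaton $\mathcal{M}$ keeps the initial state $\ket{q_1}$, whose density matrix $\ket{q_1}\bra{q_1}$ is exactly the diagonal matrix matching the initial probability vector $v_0$ of $\mathcal{P}$, and keeps the accepting set $Q_a$ with the standard ``measure in the computational basis at the end'' acceptance rule.

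The core step is to realize, for each $\sigma\in\tilde{\Sigma}$, an operation $T_\sigma$ whose effect on diagonal density matrices mirrors multiplication by the stochastic matrix $A_\sigma$. I would take $T_\sigma$ to be the following adaptive sequence, which is exactly the kind of transformation permitted in the general 1QFA model: (i) perform a projective measurement in the basis $\{\ket{q}\mid q\in Q\}$; (ii) if the outcome is $q_i$, apply a unitary $V^{(i)}_\sigma$ on $\Qspace$ chosen so that $V^{(i)}_\sigma\ket{q_i}=\sum_j \sqrt{A_\sigma[j,i]}\,\ket{q_j}$ --- such a unitary exists because $\sum_j A_\sigma[j,i]=1$ makes the target a unit vector, and any unit vector can be taken as the image of a fixed basis vector under a unitary; (iii) perform a second projective measurement in $\{\ket{q}\mid q\in Q\}$. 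Equivalently, $T_\sigma$ is the CPSO with Kraus operators $\{\sqrt{A_\sigma[j,i]}\,\ket{q_j}\bra{q_i}\}_{i,j}$, and $\sum_{i,j} A_\sigma[j,i]\,\ket{q_i}\bra{q_i}=I$ follows from stochasticity of $A_\sigma$. Crucially, this uses only the $n$ states of $Q$ --- no ancilla --- which is what lets $\mathcal{M}$ remain $n$-state.

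I would then argue by induction on $t$ that after $\mathcal{M}$ reads the prefix $\tilde{w}_1\cdots\tilde{w}_t$, its mixed state has density matrix $\rho_t=\mathrm{diag}(v_t[1],\dots,v_t[n])$, where $v_t$ is the probability vector of $\mathcal{P}$ after reading the same prefix. The base case is the remark above. For the inductive step, a diagonal $\rho_t=\sum_i v_t[i]\,\ket{q_i}\bra{q_i}$ is untouched by measurement (i), which yields $q_i$ with probability $v_t[i]$; then $V^{(i)}_\sigma$ sends $\ket{q_i}\bra{q_i}$ to $\big(\sum_j \sqrt{A_\sigma[j,i]}\ket{q_j}\big)\big(\sum_k \sqrt{A_\sigma[k,i]}\bra{q_k}\big)$; and measurement (iii) yields $q_j$ with probability $A_\sigma[j,i]$, collapsing to $\ket{q_j}\bra{q_j}$. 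Averaging over all outcomes gives $\rho_{t+1}=\sum_{i,j} v_t[i]\,A_\sigma[j,i]\,\ket{q_j}\bra{q_j}=\mathrm{diag}\big(A_{\tilde{w}_{t+1}}v_t\big)=\mathrm{diag}(v_{t+1})$, as required. Applying the acceptance rule at $t=|\tilde{w}|$ then gives $f_{\mathcal{M}}(w)=\sum_{q_i\in Q_a}\rho_{|\tilde{w}|}[i,i]=\sum_{q_i\in Q_a} v_{|\tilde{w}|}[i]=f_{\mathcal{P}}(w)$.

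The only real subtlety --- and the step I would be most careful with --- is keeping the simulation within $n$ states: a black-box Stinespring dilation of the superoperator $T_\sigma$ would introduce an ancilla of dimension up to $n^2$, so the point is to exploit that a column of $A_\sigma$ is itself a probability distribution and can be ``spread out'' from the single basis vector $\ket{q_i}$ by one $n\times n$ unitary followed by a measurement, with the choice of unitary conditioned on the preceding measurement outcome. I would also double-check that this measure/conditional-unitary/measure schedule is precisely what the general 1QFA model allows, and that it composes correctly with the end-marker transitions, which are handled simply as two further symbols of $\tilde{\Sigma}$.
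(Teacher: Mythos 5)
Your proposal is correct and matches the paper's approach: the paper proves this theorem by observing that stochastic operators are a special case of superoperators, and your Kraus-operator construction $\{\sqrt{A_\sigma[j,i]}\,\ket{q_j}\bra{q_i}\}_{i,j}$ (realized as measure/conditional-unitary/measure on the same $n$ basis states) is exactly the standard instantiation of that observation, with the diagonal-density-matrix induction correctly verifying $f_{\mathcal{M}}(w)=f_{\mathcal{P}}(w)$. The paper leaves these details to the cited references, so your write-up simply makes the folklore argument explicit.
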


This result easily follows from the fact that stochastic operators (transformation $A_{\sigma}$ in a 1PFA) are a special case of superoperators (used by 1QFAs).\index{stochastic matrix}\index{quantum superoperator}
The proof of the theorem can be found in \cite{Hir10,YS11A,SayY14A} but it has been
known as a folklore result in quantum computing community long before that.
The same result is also valid in many other settings, for example, for probabilistic and quantum Turing machines \cite{Wat03}.

The second simulation shows how to convert a 1QFA to a GFA with a quadratic increase in the number of internal states.
\begin{theorem}
	\label{thm:1QFA-is-a-GFA}
	\cite{MC00,LQ08,YS11A}
	For a given $ n $-state 1QFA $ \mathcal{M} $, there exists a $ n^{2} $-state GFA $ \mathcal{G} $
	 such that $ f_{\mathcal{M}}(w)=f_{\mathcal{G}}(w) $
	for any $ w \in \Sigma^{*} $.\index{generalized finite automaton}\index{automaton!generalized}
\end{theorem}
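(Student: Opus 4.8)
The plan is to track the computation of $\mathcal{M}$ by density matrices, observe that this evolution is a genuinely $\mathbb{R}$-linear process on a real vector space of dimension exactly $n^{2}$, and then read a GFA directly off of it.

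First I would record the density-matrix picture (Section~\ref{sec:quantum-computation}). On input $w$, write $\tilde w=\cent w\dollar$. Whichever of the 1QFA models is in force, each letter transformation $T_\sigma$ induces a completely positive superoperator $\mathcal{E}_\sigma$, acting on density matrices by $\mathcal{E}_\sigma(\rho)=\sum_i E_{\sigma,i}\,\rho\,E_{\sigma,i}^{\dagger}$ for suitable Kraus operators; starting from $\rho_0=\ket{q_1}\bra{q_1}$ and applying $\mathcal{E}_{\tilde w_1},\dots,\mathcal{E}_{\tilde w_{|\tilde w|}}$ in order yields the final density matrix $\rho_w$, and the acceptance probability is $f_{\mathcal{M}}(w)=\mathrm{tr}(P_a\rho_w)$ with $P_a=\sum_{q_i\in Q_a}\ket{q_i}\bra{q_i}$.

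The core of the argument is that each $\mathcal{E}_\sigma$ preserves Hermiticity (since $(E\rho E^{\dagger})^{\dagger}=E\rho^{\dagger}E^{\dagger}$) and is $\mathbb{R}$-linear, and that the space $\mathcal{H}_n$ of $n\times n$ Hermitian matrices is a real vector space of dimension $n^{2}$. I would fix the basis consisting of $\ket{q_i}\bra{q_i}$ for $1\le i\le n$, together with $\ket{q_i}\bra{q_j}+\ket{q_j}\bra{q_i}$ and $\mathbf{i}\bigl(\ket{q_i}\bra{q_j}-\ket{q_j}\bra{q_i}\bigr)$ for $i<j$ (here $\mathbf{i}=\sqrt{-1}$), ordered so that $\ket{q_1}\bra{q_1}$ is listed first. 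Coordinates in this basis identify $\mathcal{H}_n$ with $\mathbb{R}^{n^{2}}$; under the identification each $\mathcal{E}_\sigma$ becomes a real $n^{2}\times n^{2}$ matrix $A_\sigma$, the initial state $\rho_0$ becomes $e_1$, and --- because the off-diagonal basis matrices all have zero diagonal --- $\mathrm{tr}(P_a\rho)=\sum_{q_i\in Q_a}\rho[i,i]$ equals the sum of the coordinates of $\rho$ indexed by $\{\,i:q_i\in Q_a\,\}$. Then I would take $\mathcal{G}$ to be the GFA with $n^{2}$ states (one per basis matrix), transition matrices $\{A_\sigma\mid\sigma\in\tilde\Sigma\}$, initial state the one attached to $\ket{q_1}\bra{q_1}$, and accepting states those attached to $\ket{q_i}\bra{q_i}$ with $q_i\in Q_a$. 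Running $\mathcal{G}$ on $w$ multiplies $e_1$ by $A_{\cent},A_{w_1},\dots,A_{w_{|w|}},A_{\dollar}$, i.e.\ it is exactly the coordinate description of applying $\mathcal{E}_{\cent},\dots,\mathcal{E}_{\dollar}$ to $\rho_0$, so its final vector is the coordinate vector of $\rho_w$ and $f_{\mathcal{G}}(w)=\sum_{q_i\in Q_a}\rho_w[i,i]=\mathrm{tr}(P_a\rho_w)=f_{\mathcal{M}}(w)$.

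The main point requiring care is the state count: a naive encoding storing the real and imaginary parts of every entry of $\rho$ gives $2n^{2}$ states, and trimming this to $n^{2}$ is exactly what passing to the real subspace of Hermitian matrices buys. A secondary thing to check is that in the most general 1QFA model --- where later unitaries and measurements inside $T_\sigma$ may depend on earlier measurement outcomes --- the letter transformation still collapses to a single CPSO on density matrices; this is precisely the equivalence of the two descriptions of superoperators recalled in Section~\ref{sec:quantum-computation}. (For the unitary-only models one could instead use the classical trick of evolving $\ket{\psi}\otimes\overline{\ket{\psi}}$ by $U_\sigma\otimes\overline{U_\sigma}$, but the density-matrix version handles all models uniformly and is the route I would take.)
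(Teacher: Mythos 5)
Your proof is correct and follows essentially the same route as the paper's: linearize the evolution of the density matrix under the letter superoperators and read a GFA off the resulting real-linear maps, with the accepting value given by the coordinates corresponding to $\ket{q_i}\bra{q_i}$, $q_i\in Q_a$. If anything, your accounting of the dimension is more careful than the paper's sketch, which says to replace each complex entry of $\rho$ by two real numbers (naively $2n^{2}$ coordinates); your choice of a real basis for the $n^{2}$-dimensional space of Hermitian matrices is exactly the step needed to land on the claimed $n^{2}$ states.
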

\begin{proof}
	If we apply a superoperator, say $ \mathcal{E} $, to a quantum system in a state $\rho$,  the new state is
	$ \rho^{\prime} = \mathcal{E}(\rho) = \sum_{i=1}^k E_i \rho E_i^{\dagger}$.
	From this expression, one can see that the entries of the density matrix $\rho'$ are linear combinations of
	the entries of $ \rho $.
	
	We can linearize the computation of a given 1QFA 
	(with a quadratic increase in the size of the set of states \cite{YS11A}) 
	in a following way. We transform the density matrix into a real-valued 
    vector, replacing each complex entry of the density matrix with two real-valued elements
    of the vector. We choose the transition matrices $A_{\sigma}$ of the 1GFA so that
    they transform this vector in the same way as the superoperators $T_{\sigma}$ of the 1QFA
    transform the density matrix.
\end{proof}

Due to the equivalence 
between 1GFAs and 1PFAs, this simulation result is very useful. 
For example, it is used to show the equivalence of 1PFAs and 1QFAs in the unbounded error\index{unbounded error}\index{error!unbounded} case (Sec. \ref{sec:unbounded-error-realtime}) and various decidability\index{decidability results} and undecidability 
results (Sec. \ref{sec:decision-problems}).

In the bounded-error case, 1QFAs can recognize only regular languages 
(similarly to 1PFAs).\index{bounded error}\index{error!bounded}
The pure state version of this result was first shown for KWQFA in \cite{KW97}, 
with the bounds on the number of states shown in \cite{AF98}. 

\begin{theorem}
	\label{thm:1QFA-purestate-to-rtDFA}
	 \cite{KW97,AF98}
	If a language $L$ is recognized by an $ n $-state 1QFA with \textit{pure states} 
(e.g. MCQFAs and 1KWQFAs) with bounded error, then it can be recognized by a 1DFA with $2^{O(n)}$ states.
\end{theorem}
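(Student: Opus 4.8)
The plan is to combine a Myhill--Nerode argument with a packing bound on the unit ball of the automaton's Hilbert space. Let $\mathcal{M}$ be the $n$-state pure-state 1QFA recognizing $L$ with error bound $\epsilon<\tfrac12$, and for a prefix $u\in\Sigma^{*}$ let $\ket{\psi_{u}}\in\mathbb{C}^{n}$ denote the state of $\mathcal{M}$ after reading $\cent u$: for a \textbf{MCQFA} this is a unit vector, while for a \textbf{1KWQFA} it is the surviving (non-halting) part, tracked without renormalization, so $\|\ket{\psi_{u}}\|\le 1$. The first observation is that for every continuation $w$ the map $\ket{\psi_{u}}\mapsto\ket{\psi_{uw}}$ is a \emph{linear contraction}: each per-letter step is a unitary followed (in the KWQFA case) by the orthogonal projection onto the non-halting subspace, and a composition of unitaries and projections is norm non-increasing. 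Hence $\|\ket{\psi_{uw}}-\ket{\psi_{vw}}\|\le\|\ket{\psi_{u}}-\ket{\psi_{v}}\|$ for all $u,v,w$.

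Next I would bound acceptance probabilities by this distance. Write $f_{\mathcal{M}}(uw)=a_{u}+g_{w}(\ket{\psi_{u}})$, where $a_{u}\in[0,1]$ is the accepting weight already accumulated while reading $\cent u$ (so $a_{u}=0$ for an MCQFA) and $g_{w}(\ket{\phi})$ is the accepting weight produced while reading $w\dollar$ starting from the live state $\ket{\phi}$. Stacking the accepting components extracted at the successive steps of reading $w\dollar$ into one long vector exhibits $g_{w}(\ket{\phi})=\|D_{w}\ket{\phi}\|^{2}$ for a linear contraction $D_{w}$ (the total accepted weight never exceeds the input weight). Using $\bigl|\,\|x\|^{2}-\|y\|^{2}\,\bigr|\le\|x-y\|\,(\|x\|+\|y\|)$ together with $\|D_{w}\ket{\psi_{u}}\|,\|D_{w}\ket{\psi_{v}}\|\le1$ gives
\[
\bigl|f_{\mathcal{M}}(uw)-f_{\mathcal{M}}(vw)\bigr|\ \le\ |a_{u}-a_{v}|+2\,\|\ket{\psi_{u}}-\ket{\psi_{v}}\| .
\]
Encode the configuration after $\cent u$ as the point $c_{u}=(\ket{\psi_{u}},a_{u})$, which lies in the closed unit ball of $\mathbb{R}^{2n+1}$ since $\|\ket{\psi_{u}}\|^{2}+a_{u}\le1$. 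Then the displayed inequality yields $\bigl|f_{\mathcal{M}}(uw)-f_{\mathcal{M}}(vw)\bigr|\le 3\,\|c_{u}-c_{v}\|$ for every $w$, so if $\|c_{u}-c_{v}\|<\delta:=\tfrac{1-2\epsilon}{4}$ the two acceptance values lie on the same side of $\tfrac12$ for all $w$ (using $f_{\mathcal{M}}\ge1-\epsilon$ on $L$ and $\le\epsilon$ off $L$), i.e.\ $u$ and $v$ are Myhill--Nerode equivalent.

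Contrapositively, representatives of distinct Myhill--Nerode classes of $L$ have configuration points pairwise at distance $\ge\delta$, forming a $\delta$-separated subset of the unit ball of $\mathbb{R}^{2n+1}$. By the standard volume packing bound (disjoint open balls of radius $\delta/2$ inside a ball of radius $1+\delta/2$) such a set has size at most $(1+2/\delta)^{2n+1}$, which is $2^{O(n)}$ because $\delta$ depends only on $\epsilon$. Hence $L$ has $2^{O(n)}$ Myhill--Nerode classes, so it is regular and its minimal 1DFA has $2^{O(n)}$ states.

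The step I expect to need the most care is the representation $g_{w}(\ket{\phi})=\|D_{w}\ket{\phi}\|^{2}$ with $D_{w}$ a contraction in the KWQFA model, where the accepting probability is harvested across many intermediate measurements rather than read off at the end; here one uses precisely the ``measure-and-halt'' structure ($Q_{a}$, $Q_{r}$, $Q_{n}$ orthogonal, only the non-halting part propagating) to see that the accumulated accepting weight is bounded by the surviving norm, and that each intermediate accepting component is a linear contraction image of $\ket{\phi}$. The single-letter contraction property, the Lipschitz estimate, and the packing count in $\mathbb{R}^{O(n)}$ are all routine.
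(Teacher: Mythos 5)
Your proof is correct, and its skeleton is the same as the paper's: representatives of distinct Myhill--Nerode classes of $L$ must be mapped by the automaton to pairwise well-separated configurations, and a packing bound in an $O(n)$-dimensional space then caps the number of classes at $2^{O(n)}$. Where you differ is in how the two halves are carried out. The paper black-boxes both steps: it invokes a lemma of Bernstein and Vazirani asserting that the acceptance probabilities of any transformation sequence applied to $\ket{\psi_1}$ and $\ket{\psi_2}$ differ by at most $\|\psi_1-\psi_2\|$, and then cites known bounds on quantum fingerprints for the counting. You instead prove the Lipschitz estimate from scratch via the representation $g_{w}(\ket{\phi})=\|D_{w}\ket{\phi}\|^{2}$ with $D_{w}$ a contraction, and replace the fingerprint bound by an elementary volume packing argument in $\mathbb{R}^{2n+1}$; both substitutions are sound and give the same exponent. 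Your version also quietly repairs a point the paper's sketch glosses over: for a KWQFA the ``state after reading $w_1$'' is not a pure unit vector but a sub-normalized non-halting component together with already-accumulated halting probability, so the paper's direct appeal to the pure-state lemma is not literally applicable there. Your augmented configuration $c_{u}=(\ket{\psi_{u}},a_{u})$, with the separate $|a_{u}-a_{v}|$ term in the Lipschitz bound, is exactly the right fix, at the harmless cost of one extra real dimension and a slightly worse constant.
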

\begin{proof}
	Let $\cal M$ be the minimal 1DFA that recognizes $L$ and 
	let $N$ be the number of states of $\cal M$. 
	Let $q_1$ and $q_2$ be two states of $\cal M$. Then, there is a string $w$ such that
	reading $w$ in one of states $q_1, q_2$ leads to an accepting state and
	reading it in the other state leads to a rejecting state. 
	
	Let $\cal M'$ be a 1QFA that recognizes $L$. Let $w_1, w_2$ be strings after reading which
	$\cal M$ is in states $q_1$, $q_2$ and $\ket{\psi_1}, \ket{\psi_2}$ be the states
	of $\cal M'$ after reading $w_1, w_2$. 
	Let $T$ be the sequence of transformations
	that corresponds to reading $w$ (including the final measurement that produces the
	answer that says whether $\cal M'$ accepts or rejects the input string $w$). 
	If $\cal M'$ correctly recognizes $L$, then applying $T$ to one of $\ket{\psi_1}, \ket{\psi_2}$
	leads to a ``yes" answer with probability at least 2/3 and applying $T$ to the other state
	leads to a ``no" answer with probability at least 2/3. The next lemma provides a    
	necessary condition for that.
\begin{lemma}
\cite{BV97}
\label{lem:dist1}
Let 
\[ \ket{\psi_1} = \sum_{i=1}^n \alpha_i \ket{i} \mbox{~~~and~~~}
\ket{\psi_2} = \sum_{i=1}^n \beta_i \ket{i} .\]
Then, for any $~T$, the probabilities of $~T$ producing ``yes" answer 
on $\ket{\psi_1}$, $\ket{\psi_2}$
differ by at most $\|\psi_1-\psi_2\|$ where 
\[ \|\psi_1 -\psi_2\| = \sqrt{\sum_{i=1}^n |\alpha_i -\beta_i|^2} .\]
\end{lemma}
Hence, if a 1-way pure state QFA $\cal M'$ with $n$-dimensional state space recognizes $L$,
there must be $N$ pure states $\ket{\psi_1}, \ldots, \ket{\psi_N}$ 
in $n$ dimensions such that $\|\psi_i-\psi_j\|\geq 1/3$ for all $i, j:i\neq j$.
Such sets of states are known as quantum fingerprints\index{quantum fingerprints} and quite
tight bounds for the maximum number of quantum fingerprints in $n$ 
dimensions are known \cite{BCWW01}.
In particular, we know that $N=2^{O(n)}$.
\end{proof}

We now sketch the proof of a similar result for the general case.
Simulations of general 1QFAs by DFAs can be found in several papers (for example, 
\cite{LQZLWM12}) but we also provide an upper bound on the number of states.

\begin{theorem}
	\label{thm:1QFA-mixedstate-to-rtDFA}
	If a language is recognized by an $ n $-state 1QFA with \textit{mixed states} (e.g. 1QFA) with bounded error,\index{bounded error}\index{error!bounded} then it can be recognized by a 1DFA with $2^{O(n^2)}$ states.
\end{theorem}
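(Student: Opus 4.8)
The plan is to run the argument of Theorem~\ref{thm:1QFA-purestate-to-rtDFA} again, but with pure states in an $n$-dimensional space replaced by $n\times n$ density matrices and the Euclidean distance $\|\psi_1-\psi_2\|$ replaced by the trace distance $\frac12\|\rho_1-\rho_2\|_1$. Recall from Section~\ref{sec:quantum-computation} and the proof of Theorem~\ref{thm:1QFA-is-a-GFA} that the state of an $n$-state mixed-state 1QFA after reading a prefix of $\tilde w$ is a density matrix $\rho$ on $\mathbb{C}^n$; reading the remaining symbols applies a superoperator (a completely positive, trace-preserving map) to $\rho$, and acceptance is then determined by the measurement with respect to $Q_a$. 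A density matrix lives in the real vector space of $n\times n$ Hermitian matrices, which has dimension $n^2$, and any two density matrices are within trace distance $1$ of each other; this $n^2$ will be the source of the exponent.

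The first step is to establish the mixed-state analogue of Lemma~\ref{lem:dist1}: if $\mathcal{E}$ is a superoperator and we follow it by a two-outcome measurement, then the probability of the ``yes'' outcome on $\rho_1$ and on $\rho_2$ differ by at most $\frac12\|\rho_1-\rho_2\|_1$. This is immediate from two standard facts \cite{NC00}: trace distance is non-increasing under completely positive trace-preserving maps, so $\|\mathcal{E}(\rho_1)-\mathcal{E}(\rho_2)\|_1\le\|\rho_1-\rho_2\|_1$; and for any projector $P$ and any traceless Hermitian $\sigma$ one has $|\mathrm{tr}(P\sigma)|\le\frac12\|\sigma\|_1$.

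Now take any words $w_1,\dots,w_N$ that are pairwise inequivalent under the Myhill--Nerode relation of $L$, so that for each $i\ne j$ there is a word $u_{ij}$ with $w_iu_{ij}\in L\iff w_ju_{ij}\notin L$. Let $\rho_i$ be the density matrix of the 1QFA after it has read $\cent w_i$; note that $\rho_i$ does not depend on $j$. Reading $u_{ij}\dollar$ and performing the final measurement turns $\rho_i$ and $\rho_j$ into acceptance probabilities $f_{\mathcal{M}}(w_iu_{ij})$ and $f_{\mathcal{M}}(w_ju_{ij})$, which by the bounded-error assumption differ by at least $1-2\epsilon>0$. The lemma then forces $\|\rho_i-\rho_j\|_1\ge 2(1-2\epsilon)$, so the $\rho_i$ form a set of $N$ points in $n^2$ real dimensions that are pairwise at distance at least a fixed constant $c>0$ while all lying in a region of diameter at most $2$. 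A standard volume-packing bound gives $N\le(1+2/c)^{n^2}=2^{O(n^2)}$. Hence the index of $\sim_L$ is finite, $L$ is regular, and its minimal 1DFA has $2^{O(n^2)}$ states.

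The main thing to be careful about is the CPTP claim underlying the distinguishability lemma: the general 1QFA model allows intermediate measurements whose outcomes steer later unitaries and measurements, so one must check that the map on $\rho$ induced by reading a block of symbols --- after averaging over, or coherently deferring, those intermediate outcomes --- is genuinely completely positive and trace-preserving, which is exactly what lets the trace distance contract. Once that is in place, the packing argument in $\mathbb{R}^{n^2}$ is routine, and the exponent $O(n^2)$ (versus $O(n)$ in the pure-state case) is simply the dimension count for Hermitian matrices.
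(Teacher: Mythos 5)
Your proof is correct, and it reaches the same conclusion as the paper but by a genuinely different counting argument. The paper's sketch also reduces the problem to bounding the number of pairwise-distinguishable density matrices, but it gets the $2^{O(n^2)}$ bound by decomposing each mixed state into a mixture of at most $n$ pure states, discretizing each pure state with an $\epsilon$-net of size $2^{O(n)}$ on the unit sphere (plus a small net for the mixing probabilities), and multiplying: $(2^{O(n)})^n = 2^{O(n^2)}$. You instead work directly with the density matrices as points of the $n^2$-dimensional real space of Hermitian matrices, use monotonicity of trace distance under CPTP maps plus the Helstrom-type bound $|\mathrm{tr}(P\sigma)|\le\frac12\|\sigma\|_1$ to show that Myhill--Nerode-inequivalent prefixes yield density matrices pairwise $\Omega(1)$-separated in trace norm, and then apply a volume-packing bound in a ball of bounded diameter. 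The two routes are essentially covering versus packing and give the same exponent, but yours is more self-contained: it avoids the spectral decomposition into $n$ pure states and the separate net for the probability vector, and it makes explicit (via trace-distance contraction) why adaptive intermediate measurements cause no trouble --- a point the paper's sketch leaves implicit. The one cosmetic slip is that you first say density matrices are ``within trace distance $1$'' and later use diameter $2$; these refer to the normalized and unnormalized trace distances respectively, and the packing estimate is unaffected.
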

\begin{proof}
	The proof is similar to the previous theorem but now we have to answer the
	question: how many mixed states 
    $\rho_i$ can one construct so that, for any $i\neq j$,
	there is a sequence of transformations $T$ that produces different outcomes (``yes" in    
    one case and ``no" in the other case) with probability at least 2/3?
 
	The answer is that the number of such $\rho_i$ in $n$ dimensions is at most $2^{O(n^2)}$.
		This follows from the fact that a mixed state in $n$ dimensions 
		can be expressed as a mixture $(p_l, \ket{\psi_l})$
		of at most $n$ pure states. We can then approximate each of 
		$\ket{\psi_l}$ by a state $\ket{\psi'_l}$ 
		from an $\epsilon$-net for the unit sphere in $n$ dimensions. 
(An $\epsilon$-net is a set of states $S$ such that, for any $\ket{\psi}$, there exists $\ket{\psi'}\in S: \|
		\psi-\psi'\|\leq\epsilon$.)
		Since one can construct an $\epsilon$-net with $2^{O(n)}$ states,
		there will be $(2^{O(n)})^n = 2^{O(n^2)} $ choices for the set of states
		$( \ket{\psi_1}, \ket{\psi_2}, \ldots, \ket{\psi_n})$.
		We also need to use another $\epsilon$-net for $(p_1, \ldots, p_n)$ but the size of this
		$\epsilon$-net is $2^{O(n)}=2^{o(n^2)}$. 
\end{proof}

\subsection{Succinctness results} 
\label{sec:succinctness}
\index{state complexity}

In \cite{AF98}, it was shown that 1QFAs can indeed be exponentially more succinct  than 1PFAs.
Let $ p $ be a prime  and consider the language $\mathtt{PRIME_p}$ defined by (\ref{eq:language:MOD-p}).\index{prime language}

\begin{theorem}
	\cite{AF98}
	\label{theorem:1QFA-succinctness-AF98}
	(i)
	If $p$ is a prime, any 1PFA recognizing $ \mathtt{MOD_p} $ has at least $p$ states.
	(ii)
	For any $ \epsilon>0 $, there is a MCQFA with $ O(\log(p)) $ states recognizing $ \mathtt{MOD_p} $ with error bound $ \epsilon $.
\end{theorem}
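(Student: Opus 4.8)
The plan is to treat the two parts separately, each by reducing to a structural property of matrices; stochasticity does the work in (i), and interference together with a probabilistic choice of parameters does the work in (ii).

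\textbf{Part (i).} I would let $\mathcal P=(Q,\Sigma,\{A_\sigma\},q_1,Q_a)$ be an $n$-state 1PFA recognizing $\mathtt{MOD_p}$ with bounded error $\epsilon<\tfrac12$ (one-sided or two-sided is immaterial). Write $A:=A_a$ for the column-stochastic transition matrix of the single letter, set $u_0:=A_{\cent}v_0$, $u_i:=A^iu_0$, and $g(x):=\sum_{q_j\in Q_a}(A_{\dollar}x)[j]$, so that each $u_i$ is a probability vector, $g$ maps the simplex into $[0,1]$, and $f_{\mathcal P}(a^i)=g(u_i)$. The crucial step is Ces\`{a}ro averaging along the progression of step $p$: since $A^p$ is stochastic, the limit $z:=\lim_{N\to\infty}\frac1N\sum_{i=0}^{N-1}(A^p)^iu_0$ exists (a standard ergodic fact), is again a probability vector, and satisfies $A^pz=z$; moreover $g(z)=\lim_N\frac1N\sum_{i<N}f_{\mathcal P}(a^{pi})\ge1-\epsilon$ while $g(A^jz)=\lim_N\frac1N\sum_{i<N}f_{\mathcal P}(a^{pi+j})\le\epsilon$ for $1\le j\le p-1$. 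Hence $z,Az,\dots,A^{p-1}z$ are probability vectors which $A$ permutes cyclically; the cycle length divides $p$, and it is not $1$ because $g(z)\ge1-\epsilon>\epsilon\ge g(Az)$, so since $p$ is prime the cycle has length exactly $p$. Thus the restriction of $A$ to the span of this orbit is a linear operator of multiplicative order $p$, which forces $A$ to have a primitive $p$-th root of unity among its eigenvalues.

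To finish I would invoke the Perron--Frobenius structure of nonnegative matrices (this is the step where stochasticity, not mere reality, is essential — a general real matrix can host a primitive $p$-th root of unity in a $2\times2$ block): decomposing $A$ by communicating classes, a unit-modulus eigenvalue can only come from a final (recurrent) class, whose diagonal block is an irreducible stochastic matrix; if that block has period $h$, its peripheral spectrum is exactly the set of $h$-th roots of unity, so $p\mid h$; and an irreducible matrix of period $h$ has at least $h$ states, giving $n\ge h\ge p$. The only non-elementary ingredients here are the existence of the Ces\`{a}ro limit and these standard facts about the peripheral spectrum and cyclic structure of nonnegative matrices, which I would cite rather than reprove.

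\textbf{Part (ii).} For the upper bound I would build on the displayed two-state example. Let $\mathcal M_k$ be the two-state MCQFA whose letter transformation is rotation by $2\pi k/p$ in the $\ket{q_1},\ket{q_2}$ plane; the computation of Lemma~\ref{lem:2-state-MCQFA} (now with $\phi=2\pi k/p$) shows its state after $a^j$ is $\cos(2\pi kj/p)\ket{q_1}+\sin(2\pi kj/p)\ket{q_2}$. Now run $t$ such blocks ``in superposition'': state space $\mathbb C^{t}\otimes\mathbb C^{2}$, initial state $\frac1{\sqrt t}\sum_{i=1}^t\ket{i}\ket{q_1}$, the letter $a$ acting as $\bigoplus_{i=1}^{t}(\text{rotation by }2\pi k_i/p)$, the end marker $\dollar$ acting as $V\otimes I$ where $V$ is any unitary with $V\bigl(\frac1{\sqrt t}\sum_i\ket i\bigr)=\ket1$ (the MCQFA model permits an arbitrary unitary on $\dollar$), and $Q_a=\{(1,q_1)\}$. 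A short computation then gives $f_{\mathcal M}(a^j)=\bigl(\frac1t\sum_{i=1}^t\cos(2\pi k_ij/p)\bigr)^2$, which equals $1$ whenever $p\mid j$. It remains to choose $k_1,\dots,k_t$ so that $\bigl|\frac1t\sum_i\cos(2\pi k_ij/p)\bigr|\le\sqrt\epsilon$ for every $j$ with $p\nmid j$; I would do this by the probabilistic method, taking the $k_i$ independent and uniform in $\mathbb Z_p$. For a fixed $j\not\equiv0\pmod p$ the values $\cos(2\pi k_ij/p)$ are i.i.d.\ in $[-1,1]$ with mean $\frac1p\sum_{m=0}^{p-1}\cos(2\pi m/p)=0$, so Hoeffding gives $\Pr\bigl[\,\bigl|\frac1t\sum_i\cos(2\pi k_ij/p)\bigr|>\sqrt\epsilon\,\bigr]\le2e^{-t\epsilon/2}$; a union bound over the fewer than $p$ residues $j$ shows that $t=\lceil2\epsilon^{-1}\ln(2p)\rceil$ makes a good choice exist. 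This yields an MCQFA with $2t=O(\log p)$ states recognizing $\mathtt{MOD_p}$ with negative one-sided error bound $\epsilon$, the constant in $O(\log p)$ depending on $\epsilon$.

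\textbf{The main obstacles.} In part (i), all the content lies in the implication ``$\mathcal P$ recognizes $\mathtt{MOD_p}$'' $\Rightarrow$ ``$A$ has a primitive $p$-th root of unity as an eigenvalue''; once this is packaged through the Ces\`{a}ro average it is short, but one must be careful both that the averages converge and that the final count yields $p$ rather than only $p-1$ states — this is precisely where the cyclic-block picture of Perron--Frobenius theory is needed, and where it matters that the transition matrix is genuinely stochastic. In part (ii), the trap is that the naive design — many copies of $\mathcal M_k$, accept if the second register shows $q_1$ — plateaus at error about $\tfrac12$, because on a non-member the acceptance probability averages to roughly $\tfrac12$ over $k$; the decisive idea is to interfere the control register at the end marker and read off the \emph{amplitude} $\frac1t\sum_i\cos(2\pi k_ij/p)$ (then square it) rather than an average of probabilities, which drives the non-member acceptance probability down to $\epsilon$ using only $O(\log p)$ blocks. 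With that in hand, the existence of a good parameter set $\{k_i\}$ is a routine Hoeffding-and-union-bound argument.
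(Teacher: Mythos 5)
Your proposal is correct on both counts, but the comparison with the paper is asymmetric. For part (ii) you follow essentially the same route as the text: your $\mathbb{C}^t\otimes\mathbb{C}^2$ construction with the uniform superposition created at $\cent$ and interfered at $\dollar$ is exactly the paper's automaton on the states $q_{i,1},q_{i,2}$ (your computation of the acceptance amplitude $\frac1t\sum_i\cos(2\pi k_i j/p)$, and your diagnosis that averaging probabilities instead of amplitudes would stall at error $1/2$, match the intent of the construction). The one place you go beyond the text is that you actually carry out the existence argument for $k_1,\dots,k_t$ via Hoeffding plus a union bound over the $p-1$ nonzero residues, whereas the paper delegates this to a cited theorem of \cite{AN09}; your bound $t=\lceil 2\epsilon^{-1}\ln(2p)\rceil$ is consistent with the cited $d=2\log\frac{2p}{\epsilon}$. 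For part (i) the paper gives no argument at all (it only cites \cite{AF98}), so your Ces\`aro/Perron--Frobenius proof is a genuine addition: the orbit $z,Az,\dots,A^{p-1}z$ of the Ces\`aro limit, separated by the linear functional $g$, forces a primitive $p$-th root of unity into the spectrum of the stochastic matrix $A$, and the cyclic-class structure of a recurrent class then converts the order-$p$ eigenvalue into $p$ distinct states. I checked the delicate points — convergence of the Ces\`aro averages for a stochastic matrix, invariance of the span of the orbit under $A$, the use of primality to rule out proper divisors of the period, and the fact that unit-modulus eigenvalues arise only from recurrent classes — and they all hold; the argument is complete and correctly localized where stochasticity (rather than mere realness) is needed.
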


We now describe the construction of \cite{AF98} (in a simplified form due to \cite{AN09}). 
Let $\mathcal{M}_{k}$, for $k\in\{1, \ldots, p-1\}$, be the two state MCQFA given in Lemma 
\ref{lem:2-state-MCQFA} for $ \phi = \frac{2 \pi k }{p} $.
Thus, $\mathcal{M}_{k}$ accepts $a^j$
with probability $\cos^2 (\frac{2\pi j k}{p})$.
If $p$ divides $j$, then $\cos^2(\frac{2\pi j k}{p})=\cos^2 0 = 1$.
For $j$ that are not divisible by $p$, about half of them are accepted with 
probability less than $ \frac{1}{2} $. (This happens if $\frac{2\pi j k}{p}$
belongs to one of the intervals $[2\pi m + \frac{\pi}{4}, 2\pi m + \frac{3\pi}{4}]$
or $[2\pi m + \frac{5\pi}{4}, 2\pi m + \frac{7\pi}{4}]$.)
That is, each of the MCQFAs $\mathcal{M}_{k}$ distinguishes $a^j\in L_p$ from many (but not all) $a^j\notin L_p$. We now combine $O(\log n)$ of the $\mathcal{M}_{k}$'s into one MCQFA $\mathcal{M}$ which distinguishes $a^j\in L_p$ from all $a^j\notin L_p$.

Let $k_1, \ldots, k_d$ be a sequence of $d$ numbers, for an appropriately chosen $d=O(\log p)$. 
The set of states of $\mathcal{M}$ consists of $2d$ states $q_{1,1}$, $q_{1,2}$,
$q_{2,1}$, $q_{2,2}$, $\ldots$, $q_{d,1}$, $q_{d,2}$.
The transformation for $a$ is defined by
\[ U_a(q_{i,1}) = \cos\frac{2k_i \pi}{p} \ket{q_{i,1}}+\sin\frac{2k_i \pi}{p} \ket{q_{i,2}} ,\]
\[ U_a(q_{i,2}) = -\sin\frac{2k_i \pi}{p} \ket{q_{i,1}}+\cos\frac{2k_i \pi}{p} \ket{q_{i,2}} .\]
That is, on states $ \ket{q_{i,1}}$, $ \ket{q_{i,2}}$, $\mathcal{M}$ acts in the same way as $\mathcal{M}_{k_i}$. 

The starting state is $\ket{q_{1, 1}}$. The transformation $U_{\cent}$ 
can be any unitary transformation that satisfies $ U_{\cent} \ket{q_{1,1}}=\ket{\psi_{0}} $
where $ \ket{\psi_{0}}=\frac{1}{\sqrt{d}} (\ket{q_{1,1}}+\ket{q_{2,1}}+ \cdots+ \ket{q_{d,1}}) $ and
$U_{\dollar}$ can be any unitary transformation that satisfies
$ U_{\dollar} \ket{\psi_{0}}=\ket{q_{1,1}} $.
The set of accepting states $ Q_{a} $ consists of one state $q_{1,1}$.

If $p$ divides $j$, then, by Lemma \ref{lem:2-state-MCQFA}, the transformation 
$(U_a)^j$ maps $\ket{q_{i, 1}}$ to itself.
Since this happens for every $i$, the state $\ket{\psi_{0}}$ is also
left unchanged by $(U_a)^j$. Thus, $ U_{\dollar} $ maps $\ket{\psi_{0}}$ to $\ket{q_{1, 1}}$, the only accepting state.

If $j$ is not divisible by $p$, we have the following result:

\begin{theorem}
\cite{AN09}
There is a choice of $ d = 2 \log \frac{2p}{\epsilon}$ values
$k_1, \ldots, k_d\in\{1, \ldots, p-1\}$ such that the MCQFA $\mathcal{M}$ described above
rejects all $a^j\notin L_p$ with probability at least $1-\epsilon$.
\end{theorem}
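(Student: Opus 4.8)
The plan is to choose the $k_i$ by a probabilistic argument: show that a uniformly random choice of $d = 2\log\frac{2p}{\epsilon}$ values $k_1,\ldots,k_d$ from $\{1,\ldots,p-1\}$ makes every bad input $a^j$ (with $p \nmid j$) rejected with probability $\ge 1-\epsilon$, and then conclude that some fixed choice works for all such $j$ simultaneously. First I would compute the acceptance probability of the combined machine $\mathcal{M}$ on $a^j$. Since on the pair $\ket{q_{i,1}}, \ket{q_{i,2}}$ the transformation $(U_a)^j$ acts as a rotation by $\frac{2\pi j k_i}{p}$ (Lemma~\ref{lem:2-state-MCQFA}), starting from $\ket{\psi_0}=\frac{1}{\sqrt d}\sum_i \ket{q_{i,1}}$ and reading $a^j$ gives amplitude $\frac{1}{\sqrt d}\cos\frac{2\pi j k_i}{p}$ on $\ket{q_{i,1}}$; after $U_{\dollar}$ the amplitude on the unique accepting state $q_{1,1}$ is $\frac{1}{d}\sum_{i=1}^d \cos\frac{2\pi j k_i}{p}$, so the acceptance probability is $\left(\frac{1}{d}\sum_{i=1}^d \cos\frac{2\pi j k_i}{p}\right)^2$.

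Next I would fix $j$ with $p\nmid j$ and estimate this quantity for random $k_i$. Writing $\theta=\frac{2\pi j}{p}$, the key point is that as $k$ ranges over $\{1,\ldots,p-1\}$, the value $k j \bmod p$ also ranges over a set close to all of $\{1,\ldots,p-1\}$ (exactly so when $\gcd(j,p)=1$, which holds when $p$ is prime and $p\nmid j$); hence the random variable $\cos\frac{2\pi j k}{p}$ has mean $\frac{1}{p-1}\sum_{r=1}^{p-1}\cos\frac{2\pi r}{p} = \frac{-1}{p-1}$ (using $\sum_{r=0}^{p-1}\cos\frac{2\pi r}{p}=0$), which is tiny. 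So $\frac{1}{d}\sum_i \cos\frac{2\pi j k_i}{p}$ is an average of i.i.d.\ bounded variables with mean $O(1/p)$, and a Chernoff/Hoeffding bound shows it exceeds, say, $\frac{1}{2}$ in absolute value with probability at most $2e^{-cd}$ for a suitable constant $c$; plugging in $d = 2\log\frac{2p}{\epsilon}$ makes this $\le \epsilon/p$ (I may need to tune the constant inside the log or the cutoff $\tfrac12$ to match the exact statement). Thus $\mathcal{M}$ rejects this particular $a^j$ with probability $\ge 1-\frac{1}{4}=\frac34 \ge 1-\epsilon$ — and here one should double check which acceptance-probability threshold the error bound $\epsilon$ in the theorem refers to; I would arrange the constants so the acceptance probability on a bad word is at most $\epsilon$.

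Finally, the union bound finishes the argument. There are only finitely many residue classes $j \bmod p$ with $p\nmid j$, namely $p-1$ of them (and the acceptance probability depends only on $j \bmod p$, since $(U_a)^p$ fixes each $\ket{q_{i,1}}$), so the probability that the random choice $k_1,\ldots,k_d$ fails for \emph{some} bad class is at most $(p-1)\cdot\frac{\epsilon}{p} < \epsilon < 1$. Hence a good choice exists, and for that choice $\mathcal{M}$ rejects every $a^j\notin L_p$ with probability at least $1-\epsilon$, as claimed; combined with the earlier observation that $\mathcal{M}$ accepts $a^j$ with certainty when $p\mid j$, this gives the desired one-sided bounded-error recognition of $\mathtt{MOD_p}$ with $O(\log\frac{p}{\epsilon})$ states.

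The main obstacle I expect is the concentration step: one must verify that the summands $\cos\frac{2\pi j k_i}{p}$, while individually biased by only $O(1/p)$, genuinely behave like independent bounded variables (they are, since the $k_i$ are drawn independently) and that the tail bound is strong enough to beat the $p-1$-fold union bound with only $d=O(\log\frac{p}{\epsilon})$ terms — this forces the exponent in the Hoeffding bound to be linear in $d$ with a clean constant, and getting that constant to match the exact value $d = 2\log\frac{2p}{\epsilon}$ in the statement is the delicate part. A secondary subtlety is handling the case $p$ not prime (where $\gcd(j,p)>1$ is possible); but the theorem assumes $p$ prime, so $p\nmid j$ already implies $\gcd(j,p)=1$ and the residues $jk \bmod p$ are a genuine permutation of $\{1,\ldots,p-1\}$.
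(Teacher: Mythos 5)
Your overall strategy is exactly the one the paper alludes to (the paper gives no proof of this theorem itself; it only remarks that a random choice of $k_1,\ldots,k_d$ works with high probability, citing \cite{AN09}): compute the acceptance probability on $a^j$ as $\bigl(\frac{1}{d}\sum_{i=1}^d\cos\frac{2\pi jk_i}{p}\bigr)^2$, observe that for $p\nmid j$ (and $p$ prime) the summands are i.i.d.\ with mean $-\frac{1}{p-1}$, apply Hoeffding, and finish with a union bound over the $p-1$ nonzero residues mod $p$. All of that is correct, and your identification of the concentration step as the delicate point is exactly on target.

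The gap is that the constants genuinely do not close with $d=2\log\frac{2p}{\epsilon}$, and no tuning will fix this. To make the acceptance probability of a bad word at most $\epsilon$ you need $\bigl|\frac{1}{d}\sum_{i}\cos\frac{2\pi jk_i}{p}\bigr|\le\sqrt{\epsilon}$, so the deviation threshold in Hoeffding must be about $\sqrt{\epsilon}$, not $\frac{1}{2}$ (the latter is what produces your interim rejection probability $\frac{3}{4}$, which only meets the claim when $\epsilon\ge\frac{1}{4}$). Hoeffding then gives a per-$j$ failure probability of roughly $2e^{-d\epsilon/2}$, and beating the $(p-1)$-fold union bound forces $d\gtrsim\frac{2}{\epsilon}\ln(2p)$, i.e.\ $d=\Theta(\epsilon^{-1}\log p)$ rather than $O(\log\frac{p}{\epsilon})$. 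This is not an artifact of using Hoeffding: since $\mathrm{Var}\bigl(\cos\frac{2\pi jk}{p}\bigr)\approx\frac{1}{2}$ for $k$ uniform on $\{1,\ldots,p-1\}$, the \emph{expected} acceptance probability of a fixed bad word over the random choice of the $k_i$ is $\frac{\sigma^2}{d}+\frac{1}{(p-1)^2}\approx\frac{1}{2d}$, so $d=\Omega(1/\epsilon)$ is necessary for this construction no matter how the probabilistic argument is run. The value $d=2\log\frac{2p}{\epsilon}$ in the statement is a mis-transcription of the bound in \cite{AN09}, where the dependence has the form $d=\Theta\bigl(\frac{\log p}{\epsilon}\bigr)$: the $\frac{1}{\epsilon}$ multiplies the logarithm rather than sitting inside it. With that corrected value of $d$ your argument goes through essentially verbatim; as written, it establishes the theorem only for constant $\epsilon$.
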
 
  
The proof of this theorem is nonconstructive: It was shown in \cite{AN09} that a random
choice $k_1, \ldots, k_d$ works with a high probability.
An explicit construction of $k_1, \ldots, k_d$ for a slightly larger $d=O(\log^{2+3\epsilon} p)$
also given in \cite{AN09}. 
Constructing an explicit set $k_1, \ldots, k_d$ such that $d=O(\log p)$ and $\mathcal{M}$ recognizes 
$ \mathtt{MOD_p} $ is still an open problem, which is linked to estimating exponential sums in number theory \cite{Bo05}. 

Currently, it is also open what is the biggest possible advantage of general 1QFAs over 1PFAs or 1DFAs.
By Theorem \ref{thm:1QFA-mixedstate-to-rtDFA}, 1QFAs with $n$ states can be simulated by 1DFAs with $2^{O(n^2)}$
states. On the other hand, \cite{Fre08,FOM09} gives a 1QFA with $n$ states 
for a language in an $2^{\Omega(n \log n)}$ symbol alphabet
that requires $2^{\Omega(n \log n)}$ states on 1DFAs and 1PFAs.
(The paper \cite{Fre08} also claims a similar result for a language in a 4-symbol
alphabet but the proof of that appears to be either incomplete or incorrect.)

There has been a substantial amount of further work on the state complexity\index{state complexity} of 1QFAs.
We highlight some results:
\begin{enumerate}
	\item \cite{MP02} Any periodic language \index{periodic language}\index{language!periodic} with period $ n $ in a unary alphabet\index{unary language}\index{language!unary} can be recognized by a $ (2\sqrt{6n}) $-state MCQFA with bounded error.
	\item \cite{BMP03B} There exists a language with period $ n $ in a one-symbol alphabet that requires $ \Omega(\sqrt{\frac{n}{\log n}}) $ states to be recognized by MCQFAs.
	\item \cite{BMP05A} If the $l_1$ norm of the Fourier transform of the characteristic function of $L$ (for a periodic $L$ in one-symbol alphabet) is small, a 1QFA with a smaller number of states is possible. 
\end{enumerate}

There are also some negative results for restricted models. For example, there is a language that is recognized by a 1DFA with $n$ states but it requires $2^{\Omega(n)}$ states for NaQFAs  \cite{ANTV02}. Due to Theorem \ref{thm:1QFA-purestate-to-rtDFA}, no such result is possible
for the more general models. 

\subsection{Bounded-error language recognition in restricted models}
\index{bounded error}\index{error!bounded} 
For language recognition with bounded error, we can put the models of QFAs in the order from the weakest to the strongest:
\begin{equation}
	\mbox{MCQFA} < \mbox{LaQFA} < \mbox{KWQFA} \leq  \mbox{NaQFA} \leq \mbox{BiQFA}
	< 
	\begin{array}{l}
		\mbox{1QFA} \\
		\mbox{QFA-CL} \\
		\mbox{QFA-A} \\
		\mbox{CiQFA} \\
		\mbox{1QCFA}
	\end{array} \equiv \mbox{1DFA} .
\end{equation}
It is open whether the inclusions KWQFA$\leq$ NaQFA and NaQFA$\leq$ BiQFA are strict.

The class of languages recognized by MCQFAs with bounded error ($ \mathsf{RMO} $)
is exactly the class of group languages\index{group language}\index{language!group} \cite{BC01A,BP02,MC00}.
(See \cite{Pi87} for the definition and the details about group languages.)
Belovs et al. \cite{BRS07} have shown that the power of MCQFAs can be increased by
reading multiple symbols at a time. However, even in this case, they cannot
recognize all regular languages \cite{BRS07,QY09}.

Similar to MCQFAs, a complete characterization of the class of languages recognized by LaQFAs ($\mathsf{BLaQAL}$) was obtained by algebraic techniques \cite{ABGKMT06}.
Namely, $\mathsf{BLaQAL}$ is equal to the class whose syntactic monoid is in $ \mathsf{BG} $, i.e. block groups\index{block groups}. Therefore, $\mathsf{BMO}\subset \mathsf{BLaQAL}$.  On the other hand, $\mathsf{BLaQAL}$ is a proper subset of the class of languages recognized by KWQFAs with bounded error, i.e. $ \mathsf{BMM} $, since $ \mathsf{BMM} $ contains $ \{ a\{a,b\}^{*} \} $ \cite{ABGKMT06} which is not in $\mathsf{BLaQAL}$. 

The classes of languages recognizable by other models of 1QFAs 
have not been characterized so well (and it is not clear whether they even have simple characterizations).
While 1KWQFAs recognize more languages than MCQFAs and LaQFAs, 
they cannot recognize some regular languages \cite{KW97}, for example, $ \{ \{a,b\}^{*}a \} \notin \mathsf{BMM} $.

Researchers \cite{AF98,BP02,GKK11} have also shown that if a minimal 1DFA has some certain properties, called  \textit{forbidden constructions}, then its language cannot be recognized by KWQFAs for any or for some error bounds. Using forbidden constructions, it was also shown that $ \mathsf{BMM} $ is not closed under intersection or union \cite{AKV01}. Moreover, KWQFAs can recognize more languages if the error bound gets closer to $ \frac{1}{2} $ \cite{ABFK99}.

NaQFAs and BiQFA share many of the properties of 1KWQFAs in the bounded error setting \cite{Mer08,GKK11}. In \cite{GKK11}, it was shown that any language recognized by a BiQFA with bounded error is in the language class $ \mathsf{ER} $ which is a proper subset of REG. 

The relative power of various models has also been studied for subclasses of
regular languages.
For unary regular languages,\index{unary language}\index{language!unary} 
LaQFAs (and all models that are more powerful than LaQFAs) recognize all unary regular languages since all unary regular
languages are in $ \mathsf{BG} $ language variety \cite{Mar12}.
MCQFAs cannot recognize all unary regular languages because 
they cannot recognize any finite language. 

If we restrict ourselves to $ \mathsf{R_1} $ languages, another proper subclass of regular languages, the computational powers of KWQFA, NaQFA, and BiQFA become equivalent \cite{GKK11}.

\subsection{Unbounded-error, nondeterminism, and alternation} 
\label{sec:unbounded-error-realtime}

In the unbounded error\index{unbounded error}\index{error!unbounded} setting, the language recognition power of 1PFAs and 1QFAs (and GFAs) are equivalent. This result is followed by combining Theorems \ref{thm:1PFA-simulated-by-1QFA} and \ref{thm:1QFA-is-a-GFA} and the simulation of GFAs by 1PFAs given in \cite{Tur69}. That is, due to Theorem \ref{thm:1PFA-simulated-by-1QFA}, any language recognized by a 1PFA with cutpoint 
is also recognized by a 1QFA with cutpoint; due to Theorem \ref{thm:1QFA-is-a-GFA}, any language recognized by a 1QFA is also recognized by a GFA; and any language recognized by a GFA with cutpoint is stochastic\index{stochastic language} \index{language!stochastic} \cite{Tur69}. Therefore, the class of languages recognized by 1QFAs with unbounded error is $ \mathsf{uS} = \mathsf{S} \cup \mathsf{coS} $ \cite{YS11A}. Note that 
it is still an open problem \textit{whether $ \mathsf{S} $ is closed under complementation (page 158 of \cite{Pa71})}.

Unlike in the bounded error case, KWQFAs are sufficient  
to achieve equivalence with 1PFAs in the unbounded error setting \cite{YS09C}. For weaker models
of QFAs, MCQFAs can recognize a proper subset of $ \mathsf{uS} $ with unbounded error \cite{MC00,BC01B} and
the class of languages recognized by MCQFAs with cutpoint is not closed under complement. Moreover, if a language is recognized by a MCQFA and the isolation gap is $ \frac{1}{p(n)} $ for some polynomial $ p $, then it is a regular language \cite{BC01B}. For LaQFAs, it is still an open problem whether they can recognize every stochastic language\index{stochastic language}\index{language!stochastic} with a cutpoint.

Nondeterministic version of quantum models are defined by fixing the error type to positive one-sided unbounded error\index{one-sided unbounded error}\index{unbounded error!one-sided}\index{error!one-sided unbounded} 
\cite{ADH97}. That is, all strings accepted with non-zero probability forms the language recognized by the nondeterministic quantum model. Remark that the same definition also works for classical models.

1NQFAs\index{nondeterministic quantum automaton}\index{quantum automaton!nondeterministic}\index{automaton!nondeterministic quantum} are the nondeterministic version of 1QFAs. $ \mathsf{NQAL} $ denotes the class of languages recognized by 1NQFAs \cite{YS10A}. The first result on the power of 1NQFAs was that 1NQFAs can recognize some nonregular languages such as $ \mathtt{NEQ}=\{ w \in \{a,b\}^{*} \mid |w|_{a} \neq |w|_{b} \} $ \cite{BC01B,BP02}.
A complete characterization of $ \mathsf{NQAL} $ was given in \cite{YS10A}:  $ \mathsf{NQAL} = \mathsf{S}^{\neq} $.

Similarly to the unbounded-error case, the most restricted 1NQFA model recognizing all languages in $ \mathsf{S}^{\neq} $ is nondeterministic KWQFAs. Moreover, since any unary language in $ \mathsf{S}^{\neq} $ is regular (Page 89 of \cite{SS78}), 1NQFAs and 1NFAs have the same computational power on unary languages.\index{unary language}\index{language!unary} 

Setting error type to \textit{negative} one-sided unbounded error 
($M$ must accept all $x\in L$ with probability 1 and reject every $x\notin L$ with a non-zero probability), we obtain one-way universal QFAs (1UQFAs). A language $L$ is recognized by a 1UQFA if and only if its complement is recognizable by a 1NQFA.

Recently, alternating quantum models\index{alternating quantum automaton}\index{quantum automaton!alternating}\index{automaton!alternating quantum} were introduced as a generalization of nondeterministic quantum model \cite{Yak13B} and it was shown that one-way alternating QFAs with $ \varepsilon $-moves\footnote{The automaton can spend more than one step on each symbol.} can recognize any recursively enumerable language.\index{recursively enumerable language}\index{language!recursively enumerable} Their one-way variants are also powerful: they can recognize NP-complete problem\index{NP-complete problem} $ \mathtt{SUMSETSUM}  $ and some nonregular and nonstochastic unary languages\index{unary language} like $ \{ a^{n^2} \mid n \geq 0 \} $ with only two alternations and PSPACE-complete problem $\tt SUBSETSUM\mbox{-}GAME$ with unlimited alternation \cite{Yak16A,DHRSY14A}.

\subsection{Decidability and undecidability results} 
\label{sec:decision-problems}
\index{decidability results} \index{undecidability}
In this section, we consider decidability and complexity of various problems involving one-way QFAs whose transitions are defined using computable numbers
or a subset of computable numbers (e.g. rational or algebraic numbers). 

\subsubsection{Equivalence and minimization.}\index{equivalence of automata}\index{minimization of automata} Two automata $ \mathcal{A}_1 $ and $ \mathcal{A}_2 $ are said to be equivalent if $ f_{\mathcal{A}_1}(x) = f_{\mathcal{A}_2}(x) $ for all input strings $ x \in \Sigma^{*} $ and they are said to be $ l $-equivalent if $ f_{\mathcal{A}_1}(x) = f_{\mathcal{A}_2}(x) $ for all input strings $ x \in \Sigma^{*} $ of length at most $ l $.  It is known that \cite{Pa71,Tze92} any two GFAs $ \mathcal{G}_1 $ and $ \mathcal{G}_1 $ with $ n_{1} $ and $ n_{2} $ states are equivalent if and only if they are $ (n_1 + n_2 -1) $-equivalent.\footnote{The method presented in \cite{Tze92} was given for 1PFAs but it can be easily applied to any linearized one-way computational model (see also bilinear machine given in \cite{LQ08}).}  Due to Theorem \ref{thm:1QFA-is-a-GFA}, any $ n $-state 1QFA can be converted to an equivalent $ n^2 $-state GFA. Therefore, it follows that two 1QFAs $ \mathcal{M}_{1} $ and $ \mathcal{M}_{2} $ 	with $ n_{1} $ and $ n_{2} $ states are equivalent, if and only if they are $ (n_{1}^{2}+n_{2}^{2}-1) $-equivalent (see also \cite{BP02,LQ08,LQ09}.) Since there is a polynomial time algorithm for checking the equivalence of two GFAs with rational amplitudes \cite{Tze92}, this implies that the equivalence of two 1QFAs with rational amplitudes can be checked in polynomial time.

The minimization\index{minimization of automata} of a given 1QFA with algebraic numbers is decidable: there is an algorithm that takes a 1QFA as an input and then outputs a minimal size 1QFA that is equivalent to $ \mathcal{A} $ \cite{MQL12}. Moreover, the algorithm runs in exponential space if the transitions of 1QFAs are rational numbers \cite{Mat12,Qiu12}. 
In \cite{BMP06}, the problem of finding the minimum MCQFA for a unary periodic language\index{periodic language}\index{language!periodic} (given by a vector that describes which strings belong to the language) was studied
and it was shown that the minimum MCQFA can be constructed in exponential time.

\subsubsection{Emptiness problems and problems regarding isolated cutpoint.}
\index{emptiness problem} 
We continue with five emptiness problems and two problems regarding isolated cutpoints.  
Let $ \mathcal{A} $ be an automaton and $ \lambda $ be a cutpoint. 
\begin{enumerate}
	\item \label{item:fa-emptiness-greaterandequal} 
		Given $ \mathcal{A} $ and $ \lambda \in \left[ 0,1 \right]  $, is there any $ w \in \Sigma^{*} $, $ f_{\mathcal{A}}(w) \geq \lambda $?
	\item \label{item:fa-emptiness-lessandequal}
		Given $ \mathcal{A} $ and $ \lambda \in \left[ 0,1 \right]  $, is there any $ w \in \Sigma^{*} $, $ f_{\mathcal{A}}(w) \leq \lambda $?
	\item \label{item:fa-emptiness-equal}
		Given $ \mathcal{A} $ and $ \lambda \in \left[ 0,1 \right]  $, is there any $ w \in \Sigma^{*} $, $ f_{\mathcal{A}}(w) = \lambda $?
	\item \label{item:fa-emptiness-greater}
		Given $ \mathcal{A} $ and $ \lambda \in \left[ 0,1 \right]  $, is there any $ w \in \Sigma^{*} $, $ f_{\mathcal{A}}(w) > \lambda $?
	\item \label{item:fa-emptiness-less}
		Given $ \mathcal{A} $ and $ \lambda \in \left[ 0,1 \right]  $, is there any $ w \in \Sigma^{*} $, $ f_{\mathcal{A}}(w) < \lambda $?
	\item Given $ \mathcal{A} $ and $ \lambda \in (0,1) $, is the cutpoint isolated?
	\item Given $ \mathcal{A} $, is there a cutpoint $ \lambda' $ which is isolated?
\end{enumerate}
All of these problems are known to be undecidable for 1PFAs with rational-valued transitions and rational cutpoints \cite{Pa71,Be75,BMT77,BC03}. Since 1PFAs are a restricted form of 1QFAs, it follows that Problems 1-6 are undecidable for general 1QFAs with rational-valued transitions and rational cutpoints, and Problem 7 is undecidable for general 1QFAs with rational-valued transitions. 

On the other hand, the situation is not straightforward for the restricted one-way QFAs (\cite{BJKP05,DJK05}): Problems 1-3 are undecidable for MCQFAs with rational-valued transitions and rational cutpoints. However, Problems 4-6 are decidable for MCQFAs with algebraic-valued transitions and algebraic cutpoints, and Problem 7 is decidable for MCQFAs with algebraic-valued transitions.

Furthermore, Problems 1-3 remain undecidable for $ 13 $-state MCQFAs with rational-valued transitions  
and an input alphabet of size 7 \cite{DJK05} and for $25$-state MCQFAs with rational-valued transitions and binary alphabet \cite{Hir07}. If algebraic transitions are allowed, the number of states in undecidability results for MCQFAs can be decreased by $6$.

For KWQFAs, Problems 4 and 5 are undecidable for algebraic-valued transitions and rational cutpoints due to \cite{Jea02,YS09C,YS11A,Jea12}: given a fixed rational cutpoint $ \lambda \in [0,1] $, a 1PFA $ \mathcal{P} $ 
with algebraic-valued transitions can be transformed into a KWQFA $ \mathcal{M} $ with algebraic-valued transitions so that for any string $ x $, if $ f_{\mathcal{P}}(x) < \lambda $, $ f_{\mathcal{P}}(x) = \lambda $, or $ f_{\mathcal{P}}(x) > \lambda $, then $ f_{\mathcal{M}}(x) < \lambda $, $ f_{\mathcal{M}}(x) = \lambda $, or $ f_{\mathcal{M}}(x) > \lambda $, respectively. Therefore, the undecidability results for 1PFAs imply similar undecidability results for KWQFAs regarding Problems 4 and 5  \cite{Jea02,Jea12}.

Currently, it is open whether Problems 4 and 5 are decidable for LaQFAs and whether Problems 6 and 7 are decidable for the models between MCQFAs and general 1QFAs.

In \cite{CL89}, a promise version of the emptiness problem\index{emptiness problem} for 1PFAs was considered, with a promise that  
either the automaton accepts at least one input $w\in\Sigma^*$ with probability at least $ 1 - \epsilon $ or the accepting probability is at most $ \epsilon $ for all $w\in\Sigma^*$, where $ \epsilon < \frac{1}{2} $.
It was shown that this problem is undecidable for 1PFAs with rational-valued transitions. 
This implies that the same problem is also undecidable for 1QFAs.

Recently, the emptiness problem for alternating QFAs, \textit{i.e. whether the given automaton defines an empty set or not}, was examined in \cite{DHRSY14A},
with the following results: (i) the problem is decidable for NQFAs with algebraic-valued transitions on general alphabet and UQFAs with computable-valued transitions on unary alphabets,\index{unary language}\index{language!unary} but, (ii) it is undecidable for UQFAs on general alphabets and alternating 1QFAs on unary alphabets, where both of them are defined with rational-valued transitions.

\subsubsection{Other problems.}
In \cite{Pa71} (Theorem 6.17 on Page 190), the problem of deciding \textit{whether the stochastic language\index{stochastic language}\index{language!stochastic} recognized by a 1PFA $ \mathcal{P} $ with cutpoint $ \lambda $ is regular (or context-free)} was shown to be undecidable for 1PFAs with rational-valued transitions and a rational $ \lambda \in [0,1) $. By the discussion above, the same problem is undecidable for 1QFAs with rational-valued transitions and rational cutpoints and KWQFAs with algebraic-valued transitions and rational cutpoints.

A $ k $-QFA classifier \cite{BMP06} is a system of $ k $ QFAs $(\mathcal{M}_1,\ldots,\mathcal{M}_k)$ on an alphabet $ \Sigma $  such that each $ \mathcal{M}_i $ accepts at least one string with probability bigger than $ \frac{1}{2} $ and there is no string which is accepted by both $ \mathcal{M}_i $ and $ \mathcal{M}_j $ (for some $i, j: i \neq j $) with probability bigger than $ \frac{1}{2} $. A complete $ k $-QFA classifier is a $ k $-QFA classifier such that each string is accepted by exactly one QFA with probability bigger than $ \frac{1}{2} $. It was shown that \cite{BMP06} for any $ k \geq 2 $, it is decidable \textit{whether $(M_1, \ldots, M_k)$ is a $ k $-QFA classifier}. On the other hand, it is undecidable whether $(M_1, \ldots, M_k)$ is a complete $k$-QFA classifier. 

In \cite{BP10},  two polynomial-time algorithms were given for KWQFAs on a unary alphabet\index{unary language}\index{language!unary} $ \Sigma = \{a\} $ with rational-valued transitions. A KWQFA $ \mathcal{M} $ on a unary alphabet can be viewed as a quantum Markov chains.\index{quantum Markov chain}\index{Markov chain!quantum} Then, its non halting subspace decomposes into the ergodic and the transient subspaces (see \cite{AF98,BP10} for the details). The first algorithm of \cite{BP10} computes  the dimensions of these subspaces. The second algorithm decides whether $ f_{\mathcal{M}} $ has a period of $ d \geq 0 $ such that $ \forall k \in \mathbb{N} \left( f_{\mathcal{M}} (a^{k}) = f_{\mathcal{M}} (a^{k+d}) \right) $.

\section{Two-way QFAs}
\label{sec:2QFAs}
\index{two-way automaton}\index{automaton!two-way}
A two-way model has a read-only input tape on which the given input, say $ w $, is written between $ \cent $ (the left end-marker) and $ \dollar $ (the right end-marker) symbols. The tape square on which $ \tilde{w}_i $ is written is indexed by $ i $, where $ 1 \leq i \leq |\tilde{w}| $.

A two-way QFA\index{two-way quantum automaton}\index{quantum automaton!two-way}\index{automaton!two-way quantum} can be defined either as a fully quantum machine or a classical machine augmented with a finite-size quantum register (memory). The former one is known as two-way QFAs with quantum head 
(2QFAs), and the latter one is known as two-way QFAs with classical head (2QCFAs).

\subsection{2-way QFAs with classical head}

A 2-way QFA with classical head (2QCFA), also known as two-way finite automaton with quantum and classical states \cite{AW02}, is a 2-way automaton augmented with a quantum register. The computation is governed classically. In each step, the classical part applies a quantum operator to the quantum register and then updates itself by also taking into account any measurement outcome obtained from the quantum part.

Formally, a 2QCFA $\cal M$ is an 8-tuple
$
	\mathcal{M}=(S,Q,\Sigma,\delta,q_{1},s_1,s_a,s_r),
$
where 
\begin{itemize}
\item
$ S $ is the set of states for the classical part 
and $Q$ is the set of basis states 
for the quantum part;
\item $ \delta $ is a transition function (consisting of $ \delta_c $ and $ \delta_q $ that governs the classical part and the quantum part of the machine, described in more detail below); 
\item
$ s_{1} \in S $ and $ q_{1} \in Q $ are the initial states for the classical and the quantum part, respectively; 
\item
$ s_a \in S $ and $s_r \in S $ ($s_a \neq s_r$) are the accepting and the rejecting states, respectively.
\end{itemize}

Each step of $ \mathcal{M} $ has two stages: a quantum transition ($ \delta_q $) and then a classical transition ($ \delta_c $):
\begin{itemize}
\item
The classical state $s\in S \setminus \{s_a,s_r\}$ and the input symbol $\sigma \in \tilde{\Sigma}$ determine an action $ \delta_q(s,\sigma) $ that is performed on the quantum register. This action can be a unitary transformation or a projective measurement.
\item
Then, the computation is continued classically. If $ \delta_q(s,\sigma) $ was a unitary transformation, then 
the classical transition $ \delta_c(s,\sigma) $ is an element of $ S \times \{ -1,0,+1\} $ specifying a new classical state and a movement of the tape head (left, stay, or right, respectively). If $ \delta_q(s,\sigma) $ is a measurement, 
the classical transition $ \delta_c(s,\sigma,\tau) $ is also an element of $ S \times \{ -1,0,+1 \} $ but is defined by a triple $(c, \sigma, \tau)$ which includes the outcome 
$\tau$ of the measurement on the quantum part.
\end{itemize}

At the beginning of the computation, the head is on the left end-marker, the classical state is $ s_1 $, and the quantum state is $ \ket{q_1} $. The computation is terminated and the input is accepted (resp., rejected) when $ \mathcal{M} $ enters the state $ s_a $ (resp., $ s_r $). It is obvious that any 2PFA can be simulated by a 2QCFA.

A particular case of a 2QCFA is a 1QFA with restart: it reads the input from the left to the right in one-way mode, and if the computation does not halt (does not enter an accepting or rejecting state), the computation is restarted after reading the right end-marker \cite{YS10B}. Its probabilistic counterpart is 1PFA with restart.

\subsubsection{Bounded-error language recognition.}\index{bounded error}\index{error!bounded} Unlike one-way models, 2QCFAs are more powerful than their classical counterpart (2PFAs) \cite{AW02}:
\begin{itemize}
\item
the language $ \mathtt{EQ}  = \{ w \in {a,b}^* \mid |w|_a = |w|_b \}  $ can be recognized by 2QCFAs in polynomial expected time \cite{AW02} 
but can be  recognized by 2PFAs only in exponential time \cite{Fr81,DS90};
\item
the language $ \mathtt{PAL} = \{w \in \{a,b\}^{*} \mid w=w^{r} \} $ can be recognized by 2QCFAs in exponential expected time but cannot be recognized by 2PFAs (and more 
generally, by Turing machines with working tape of size $o(\log n)$) at all.
\end{itemize}

We now describe 2QCFAs for these languages. Both of them execute an infinite loop with two parts. The first part is quantum and the second part is classical.

\textbf{2QCFA for $ \mathtt{EQ} $:} The 2QCFA $\mathcal{M}_1$ has two quantum states $ \{q_1,q_2\} $. 
\begin{itemize}
	\item In the quantum part, $\mathcal{M}_1$ starts in state $\ket{q_1}$ in its quantum register and reads $w$ from left to right.
Each time when $\mathcal{M}_1$ reads $a$, it applies a rotation by an angle $ \sqrt{2}\pi $ in the real $ \ket{q_1} $-$\ket{q_2}$ plane 
in a counterclockwise direction. When $\mathcal{M}_1$ reads $b$, it applies a rotation by $ \sqrt{2}\pi $ in the clockwise direction.
When $\mathcal{M}_1$ arrives at the right end-marker, the quantum register is measured in computational basis and the input is rejected if $ \ket{q_2} $ is observed. 

If $w\in \mathtt{EQ} $, the rotations in both directions cancel out \ and the final quantum state is exactly $ \ket{q_1} $. Therefore, $w$ is never rejected. If $w\notin \mathtt{EQ}$, then the final quantum state is always away from $ \ket{q_1} $-axis and the resulting rejecting probability can be bounded from below by  $ \frac{1}{2|w|^2} $ (a nice property of rotation angle $\sqrt{2}\pi$). See Figure \ref{fig:2qcfa-for-eq} for some details of the quantum phase.
\begin{figure}[!htb]
\centering
\includegraphics[scale=0.4]{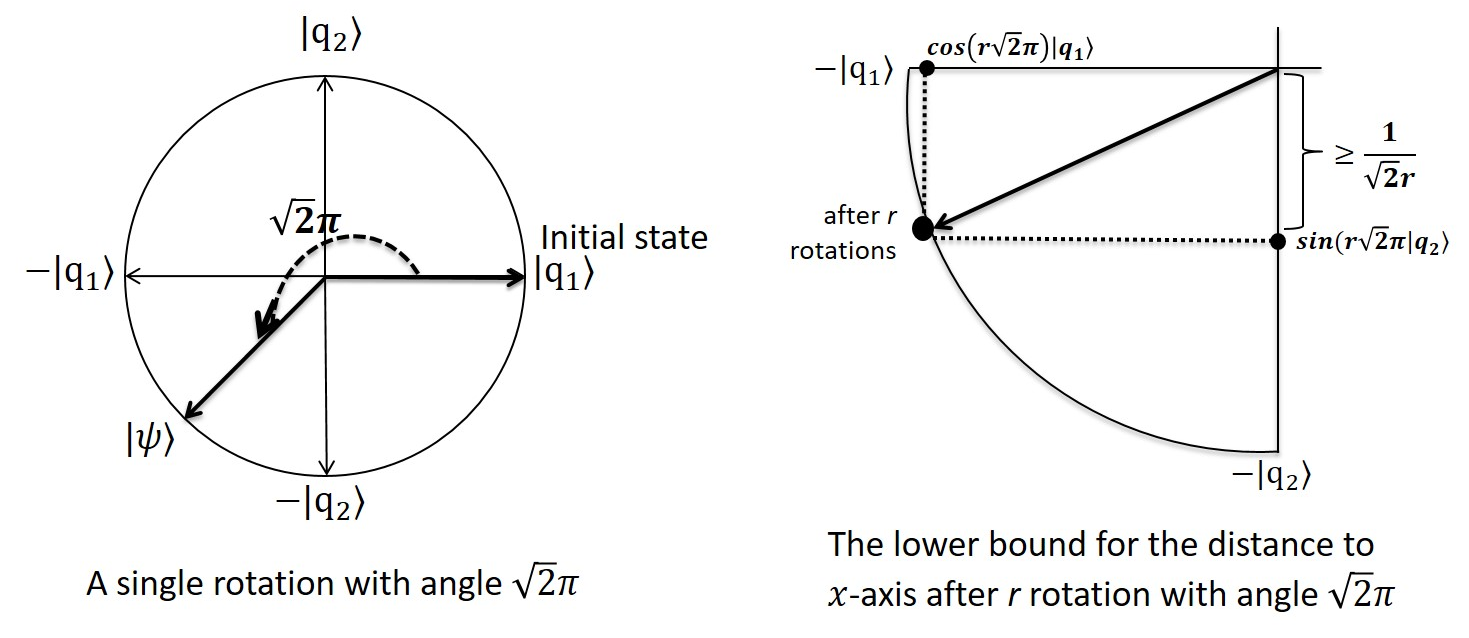}
\caption{Some details of the quantum phase, partially taken from \cite{SayY14A}}
\label{fig:2qcfa-for-eq}
\end{figure}
\item In the classical part, $ \mathcal{M}_1 $ performs a classical procedure (two consecutive random walks\index{random walk} on the string $w$ and then $k$ consecutive coin flips) that results in accepting $w$ with probability $ \frac{1}{2^k|w|^2} $ for some $ k>1 $, using expected time $O(|w|^2)$.
\end{itemize}
If $w\in \mathtt{EQ} $, $w$ is accepted with probability 1 in $ O(|w|^4) $ expected time, through the classical part of the loop.
If $w\notin \mathtt{EQ} $, the probability of rejection in the quantum part of the loop is larger than the probability of accepting in the classical part, i.e., $w$ is rejected with a probability at least $ \frac{2^{k}}{2^{k}+2} > \frac{1}{2} $ in $ O(|w|^2) $ expected time. \qed

\textbf{2QCFA for $ \mathtt{PAL} $:} The 2QCFA $ \mathcal{M}_2 $ has three quantum states $ \{q_1,q_2,q_3\} $. 
\begin{itemize}
	\item It starts the quantum phase in state $ \ket{q_1} $ and reads the input from left to right twice. In the first read, it applies 
\[
	U_a =  \dfrac{1}{5} \left( \begin{array}{rrr} 
		4 & 3 & 0 
		\\
		-3 & 4 & 0 
		\\
		0 & 0 & 5
\end{array}	 \right)
\mbox{~~ and ~~}
U_b =  \dfrac{1}{5} \left( \begin{array}{rrr} 
		4 & 0 & 3 
		\\
		0 & 5 & 0 
		\\
		-3 & 0 & 4
\end{array}	 \right)
\]
for each $a$ and $b$, respectively. In the second reading, it applies inverses of the matrices, respectively. Then, the quantum register is fully measured with respect $\ket{q_1}, \ket{q_2}, \ket{q_3}$ and the input is rejected if the result is not $\ket{q_1}$. So, if $w$ is palindrome, the state ends in $ \ket{q_1} $, i.e.
\[
	\ket{q_1} =  U_{w_{|w|}}^{-1} U_{w_{|w|-1}}^{-1} \cdots  U_{w_1}^{-1} U_{w_{|w|}} \cdots U_{w_2} U_{w_1} \ket{q_1},
\]
 and so $w$ is not rejected. Otherwise, the computation does not return to the initial quantum state exactly, which is away from $ \ket{q_1} $ by at least a value exponential small in the length of input, and the input is rejected with a probability at least $ 25^{-|w|} $ (due to the certain properties of $ U_a $ and $U_b$, see \cite{AW02} for the details).
	\item Similar to $ \mathcal{M}_1 $, in the classical phase the input is accepted with a sufficiently small probability, i.e. $ 2^{-4k|w|} $ for some $ k>1 $. 
\end{itemize}  
Thus, $ \mathcal{M}_2 $ accepts $ w $ with probability 1 if $w\in \mathtt{PAL} $ and rejects $ w $ with a probability at least $ \frac{16k}{16k+25} > \frac{1}{2} $,
otherwise.\qed

We note that $ \mathcal{M}_2 $ only uses rational-valued amplitudes. On the other hand, 
allowing arbitrary real numbers does not help 2PFAs for recognizing $ \mathtt{PAL} $ \cite{DS92}. 
 
These results have been generalized in \cite{YS10B}, by showing that all languages in  $ \mathsf{S}^{=}_{\mathbb{Q}} \cup \mathsf{S}^{\neq}_{\mathbb{Q}}  $\index{stochastic language} 
can be recognized by KWQFAs with restart (and so by 2QCFAs) with bounded error. S$ ^{=}_{\mathbb{Q}} $ contains many well-known languages: 
$ \mathtt{EQ} $, 
$ \mathtt{PAL} $, 
$ \mathtt{TWIN} =\{ wcw \mid w \in \{a,b\}^{*} \} $,
$ \mathtt{SQAURE}= \{a^{n}b^{n^{2}} \mid n > 0 \} $,
$ \mathtt{POWER} = \{ a^{n}b^{2^{n}} \mid n > 0 \} $, the word problem\index{word problem} of finitely generated free groups, all \textit{polynomial languages}\index{polynomial languages} defined in \cite{Tur82}, and 
$ \mathtt{MULT} =\{x \# y \# z \mid x,y,z \mbox{ are natural numbers in binary notation and } x \times y = z \} $.
Note that KWQFA with restart is the most restricted of known two-way QFA models that is more powerful than its classical counterpart (1PFA with restart).

\subsubsection{Succinctness results.} 
\index{state complexity} 2QFAs can also be more succinct  than their one-way versions and their classical counterparts \cite{YS10B,ZGQ14A}. The main result is that for any $ m>0 $, $ \overline{ \mathtt{MOD_{m}} } $ (the complement of language $\mathtt{MOD_{m}}$ defined in (\ref{eq:language:MOD-p})) can be recognized by a 1QFA with restart (and so by a 2QCFA) with a constant number of states for any one-sided error bound. 
On the other hand, the number of states required by bounded-error 2PFAs increases when $ m $  gets bigger. This also implies a similar gap between 2QCFAs and 1QFA: due to Theorem \ref{thm:1QFA-mixedstate-to-rtDFA}, a 1QFA with a constant number of states can be simulated by a 1DFA (and, hence, 2PFA) with a constant number of states (where the constant may be exponentially larger).

\subsubsection{Other results.} In \cite{RV07}, the simulation of a restricted bounded-error 2QCFA by weighted automata\index{weighted automaton}\index{automaton!weighted} was presented. No other ``non-trivial'' upper bound is known for bounded-error 2QCFAs. On the other hand, 
 it was shown that, if we allow arbitrary transition amplitudes (including non-computable ones), bounded-error 2QCFAs can recognize uncountably many languages in polynomial time \cite{SayY14B}. This is an evidence that 2QFAs can be very sensitive to the type of numbers that we use as transition amplitudes.

\subsection{2-way QFAs with quantum head}
The definition of 2QFAs with quantum head is  technically more difficult than that of 2QCFAs.
Because of that, we only provide an informal definition and an example of a 2QFA and  refer the reader to \cite{YS11A} for the remaining details. 

Let ${\cal M}$ be an $n$-state 2-way automaton with the set of states $\{ q_1,\ldots,q_n \}$
and let $w$ be an input string. Then, the possible configurations of ${\cal M}$ on the input $w$ can
be described by pairs $(q_i,j)$ consisting of automaton's internal state $i:1 \leq i \leq n$ and the location $j:1 \leq j \leq |\tilde{w}|$ in the input string which the automaton
is currently reading. A probabilistic automaton (2PFA) can be in a probability distribution of the classical configurations during its computation. A 2QFA can be in a quantum state 
with the basis states $\ket{q_i,j}$.
The evolution of a 2QFA governed by quantum operators (measurements, unitary operators, superoperators, etc.).

$\cal M$ evolves according to a transition rule which depends on the current state $q_i$
and the symbol $\tilde{w}_j$ at the current location. For example, if $ \cal M $ evolves unitarily, we have local transitions of the form
\begin{equation}
\label{eq:local}
	\ket{q,j} \rightarrow \sum_{q'\in\{q_1, \ldots, q_n\}, c\in\{-1, 0, 1\}} 
\alpha^{(q, \tilde{w}_j)}_{q',c} \ket{q',j+c}
\end{equation}
where $c\in\{-1, 0, 1\}$ corresponds to moving left, staying in place, or moving right and the transition amplitudes $\alpha^{(q, \tilde{w}_j)}_{q',c}$ depend on the state $q$ before the transition and the symbol $\tilde{w}_j$ that the automaton reads. 
By combining those transitions for all $q$ and $j$, we get an operator $ U_{\mathcal{M}}(w) $ that describes the evolution of the whole state space of $M$. This operator $ U_{\mathcal{M}}(w) $ must be unitary for any $w \in \Sigma^*$. This implies a finite list
of constraints on the amplitudes $\alpha^{(q, \tilde{w}_j)}_{q',c}$ in the local transition rules (\ref{eq:local}), known as the {\em well-formedness conditions} \cite{Yak11A,YS11A}. 

To stop the computation, we perform a partial measurement on QFA's quantum state after each application of $ U_{\mathcal{M}}(w) $, with respect to the partition of basis states into the set of accepting states $Q_a$, the set of rejecting states $Q_r$, and
the set of non-halting states $Q_n$. If the result is $Q_a$ (resp., $Q_r$), the computation is terminated and the input is accepted (resp., rejected). Otherwise, the computation is continued. 

The model above is the first 2QFA model, called two-way KWQFA (2KWQFA) \cite{KW97}. Although some interesting results obtained based on this model, it is still open whether 2KWQFAs can simulate 2PFAs. The Hilbert space can also be evolved by superoperators \cite{YS11A}, and then 2QFAs can simulate both 2QCFAs and 2PFAs exactly. 

If the head of a 2QFA is not allowed to move to left, then we obtain a 1.5-way QFA (1.5QFA). Here ``1.5'' emphases  that the head is quantum and so it can be in more than one position during the computation. 



\subsubsection{Bounded-error language recognition.}\index{bounded error}\index{error!bounded} As described above, 2QFAs and 1.5QFAs can be in a superposition\index{superposition} over different locations of the input tape instead of only being in a superposition of states. This enables them to use the length of input as a counter. We present a linear-time 1.5-way KWQFA for the language $ \mathtt{EQ} $ using this idea.\index{quantum automaton!1.5-way}\index{automaton!1.5-way quantum}

\textbf{1.5-way KWQFA for $ \mathtt{EQ} $:} 
Our automaton $\cal M$ has $ 5 $ states $ \{ q_1, q_2, q_w, q_a, q_r   \} $, with $q_1$ as the starting state. To determine whether the input should be accepted, we use the following measurement: if the computation is in a configuration containing $ q_a $ (resp., $ q_r $), then the input is accepted (resp., rejected). Otherwise, the computation goes on. 

The transitions are defined as follows\footnote{All transitions that are omitted below are not significant and so they can be arbitrary by guaranteeing that the related operator is unitary.}:
\begin{itemize}
\item
On the left end-marker, the starting state $ \ket{q_1} $ is mapped to $ \frac{1}{\sqrt{2}} \ket{q_1} + \frac{1}{\sqrt{2}} \ket{q_2}  $ and the head of $\cal M$ moves one square to the right;
\item
On symbol $a$, $\cal M$ performs the mapping: $\ket{q_1}\rightarrow\ket{q_w}$, 
$\ket{q_w}\rightarrow\ket{q_1}$, $\ket{q_2}\rightarrow\ket{q_2}$, staying in place if
the state after the transformation is $\ket{q_w}$ and moving to the right otherwise; 
\item
On symbol $b$, $\cal M$ performs the mapping: $\ket{q_2}\rightarrow\ket{q_w}$, 
$\ket{q_w}\rightarrow\ket{q_2}$, $\ket{q_1}\rightarrow\ket{q_1}$, staying in place if
the state after the transformation  is $\ket{q_w}$ and moving to the right otherwise; 
\item
On the right end-marker, $\cal M$ maps $ \ket{q_1} \rightarrow \frac{1}{\sqrt{2}} \ket{q_a} + \frac{1}{\sqrt{2}} \ket{q_r} $ and $ \ket{q_2} \rightarrow \frac{1}{\sqrt{2}} \ket{q_a} - \frac{1}{\sqrt{2}} \ket{q_r} $.
\end{itemize}
An example run of the machine is given in Figure \ref{fig:1-5qfa}, in which each arrow represents a single step and it is clear that after the second step the head places on the different squares of the tape until the end of the computation where they meet again and so they affect each other.

\begin{figure}[!htb]
\centering
\includegraphics[scale=.4]{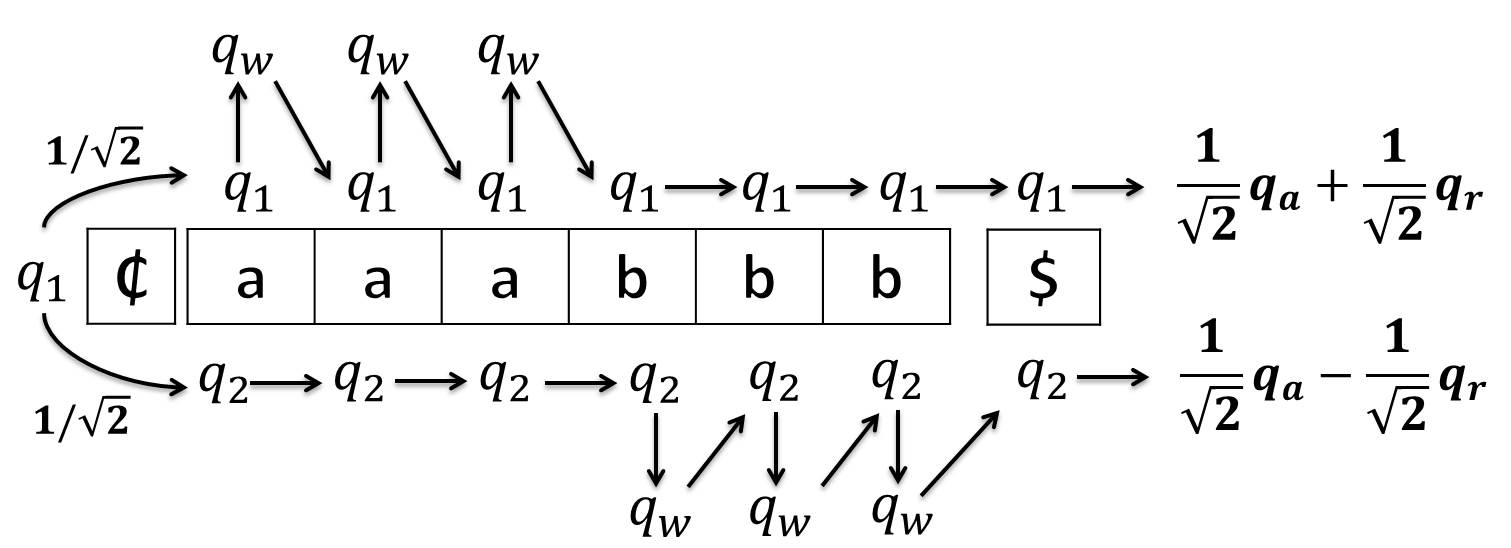}
\caption{An example run of $\mathcal{M}$}
\label{fig:1-5qfa}
\end{figure}

To analyze how $\cal M$ works, we observe that, on the left end-marker, it enters the state
$ \frac{1}{\sqrt{2}} \ket{q_1} + \frac{1}{\sqrt{2}} \ket{q_2}  $.
Every $a$ symbol results in the $q_1$ component moving to the right in 2 steps and 
the $q_2$ component moving to the right in 1 step. 
Every $b$ results in $q_1$ moving to the right in 1 step and 
$q_2$ moving to the right in 2 steps. 
If $|w|_a = |w|_b$, the automaton reaches the right end-marker at the same time in
$q_1$ and $q_2$. If $|w|_a \neq |w|_b$, one of components reaches the end-marker earlier than the other. 

In the first case, applying the transformation on the right end-marker gives the configuration 
\[
	\left( \frac{1}{2} \ket{ q_a,|\tilde{w}| } + \frac{1}{2} \ket{ q_r,|\tilde{w}| } \right) + 
    \left( \frac{1}{2} \ket{ q_a,|\tilde{w}| } - \frac{1}{2} \ket{ q_r,|\tilde{w}| } \right) =  \ket{ q_a,|\tilde{w}| } .
\]
So, the input is accepted with probability 1. In the second case,
we have $ \ket{q_1} \rightarrow \frac{1}{\sqrt{2}} \ket{q_a} + \frac{1}{\sqrt{2}} \ket{q_r} $ and $ \ket{q_2} \rightarrow \frac{1}{\sqrt{2}} \ket{q_a} - \frac{1}{\sqrt{2}} \ket{q_r} $ applied on the right end-marker
at different times and, in each case, $q_a$ and $q_r$ are obtained (observed) with equal probability.
Thus, the input is accepted with probability 1/2. \qed

The probability of accepting $x\notin  \mathtt{EQ} $ can be decreased from 1/2 to $1/k$, for
arbitrary $k$ \cite{KW97,YS09B}, with the number of states in the automaton increasing to
$O(k^2)$ using the construction of \cite{KW97} and to $O(\log^c k)$ using the construction of
\cite{YS09B}.

Currently we do not know any language separating 2QCFAs and 2QFA or any language requiring exponential expected time by two-way QFAs. Also, even though 1.5KWQFAs can recognize
non-regular languages (such as $\mathtt{EQ}$), it is not known whether they can recognize all regular languages with bounded error. It is also open whether 2QFAs can recognize a nonstochastic language\index{nonstochastic language}\index{language!nonstochastic} with bounded error.

\subsubsection{Unbounded-error language recognition.}
\index{unbounded error}\index{error!unbounded}
The superiority of 2QFAs also holds in the unbounded error  case. The language
\[ \mathtt{NH} = \{a^{x}ba^{y_{1}}ba^{y_{2}}b \cdots a^{y_{t}}b \mid x,t,y_{1}, \ldots, y_{t}  
	\in \mathbb{Z}^{+} \mbox{ and } \exists k ~ (1 \le k \le t), x=\sum_{i=1}^{k}y_{i} \}
\]
is nonstochastic\index{nonstochastic language}\index{language!nonstochastic} \cite{NH71} but is recognized by 1.5-way KWQFAs \cite{YS09C,YS11A} (by a generalization of the technique used by 1.5-way KWQFAs for $ \mathtt{EQ} $ in the previous section). This shows a superiority over probabilistic automata because 2PFAs cannot recognize any nonstochastic language \cite{Ka91}.
In fact, 1.5-way KWQFAs can recognize most of the nonstochastic languages defined in literature \cite{Die77,FYS10A}.
We note that the best known upper bound (in terms of complexity classes) for unbounded-error 2QFAs (with algebraic-valued transitions) is $ \mathsf{P} \cap \mathsf{L}^2 $ \cite{Wat03}. (Also see \cite{Yam15} for certain relations and upper bounds were defined on the running time of 2KWQFAs under different recognition modes.)

\subsubsection{Undecidability of emptiness problem.}\index{emptiness problem}\index{undecidability} 1.5KWQFAs have the capability of checking successive equalities, e.g. $ a^{n}ba^{n}ba^{n}ba^{n}\cdots $ (see \cite{Yak12A}). This leads to the following result: The emptiness problem for one-way KWQFAs (with algebraic-valued transitions) is undecidable \cite{AI99}. This is shown by a reduction from the halting problem for one-register machines, which is known to be undecidable. 

\section{Other models and results} 
\label{sec:other-models}

\textbf{Interactive proof systems.}\index{interactive proof system} An interactive proof system consists of two parties: the {\em prover} 
with an unlimited computational power and the {\em verifier} 
who is computationally limited.
Both parties are given an input $w$ and can send messages to one another.
We say that a language $L$ has an interactive proof system if there is a strategy for
the verifier with the following two properties:
\begin{enumerate}
\item[(a)]
If $w\in L$, there exists a strategy for the prover such that, given a prover who acts 
according to this strategy, the verifier accepts with probability at least $2/3$;
\item[(b)] 
If $w\notin L$, then, for any prover's strategy, the verifier rejects with probability 
at least $2/3$.
\end{enumerate}
$\sf QIP(M)$ denotes the class of all languages which have quantum 
interactive proof systems with verifiers of type $M$. Obviously, if $L$ is recognizable by type $M$ machine, it has a trivial interactive proof system in which the verifier runs its algorithm for recognizing $L$ and disregards the prover. Thus, $\sf QIP(M)$ can be much larger than $M$.

For finite automata, $\sf BMM$ is smaller than $\mathsf{REG}$ but $\mathsf{QIP(KWQFA)} = 
\mathsf{REG}$. For 2-way automata, we have the upper bound $ \mathsf{QIP_{\tilde
{\mathbb{C}}}(\mbox{poly-time } 2KWQFA)} \subseteq \mathsf{NP} $ \cite{NY09}.

In the multiprover 
version of this model (denoted $\sf QMIP(\cdot)$), the verifier can 
ask question to multiple provers and he is guaranteed that the provers do not interact 
one with another. Then, we know
\cite{Yam14}: 
\begin{itemize}
	\item $ \mathsf{CFL} \subseteq \mathsf{QMIP_{ \mathbb{ 
\tilde{C} }} (KWQFA)} \subseteq \mathsf{NE} $,
	\item $ \mathsf{QMIP_{ \mathbb{ \tilde{C} }} (\mbox{poly-time }2KWQFA)} = \sf NEXP $, and
	\item every recursively enumerable language\index{recursively enumerable language}\index{language!recursively enumerable} is in $ \mathsf{QMIP_{ \mathbb{ \tilde{C} }} (2KWQFA)} $,\end{itemize}
where $ \sf CFL $ and $ \sf NE $ are the classes of context-free languages\index{context-free language}\index{language!context-free} and 
languages recognizable in time $2^{O(n)}$.

It is interesting to compare this with the classical case where $\sf NEXP$ is equal to the 
class of all languages that have interactive proofs with a polynomial time Turing 
machine as the verifier (which is a much stronger model than a 2KWQFA).

An Arthur-Merlin\index{Arthur-Merlin} (AM) proof system is an interactive proof system in which all of the  verifier's probabilistic choices are visible to the prover. Thus, the prover has a complete information about the computational state of the verifier. In the quantum version, the verifier has a quantum register and the outcome is sent to the prover  whenever it is measured (so that the prover still has a complete information about the state of the verifier).

If the verifier is a 2QCFAs and
all the transitions are restricted with rational numbers, we have the following results 
\cite{Yak13C}: 
\begin{itemize}
	\item $   \mathsf{AM_{\mathbb{Q}}(2QCFA)} $ contains $ 
\mathsf{ASPACE(n)} \cup \mathsf{PSPACE} $ and some $\mathsf{NEXP}$-complete 
languages,\footnote{The proof of $   \mathsf{AM_{\mathbb{Q}}(2QCFA)} $ contains 
$\mathsf{PSPACE}$ will appear in an extended version of \cite{Yak13C}.} and,
	\item  
every recursively enumerable language\index{recursively enumerable language}\index{language!recursively enumerable} is in $ \mathsf{weak\mbox{-}AM_{\mathbb{Q}}(2QCFA)} 
$ where the prefix ``weak-'' denotes the class of languages having a proof system where 
the non-members do not need to be rejected with high probability.
\end{itemize}

The first result should be contrasted with the fact that 
$ \mathsf{weak\mbox{-}AM(2PFA)} $ is a proper subset of $ \mathsf{P} $ \cite{DS92}
and the second result should be compared with the fact that 
every recursively enumerable language\index{recursively enumerable language}\index{language!recursively enumerable} is in $ \mathsf{weak\mbox{-}IP(2PFA)} $ \cite{CL89}
(which is a similar result but uses a stronger computational model: IP instead of AM).

If we allow real and computable real numbers as amplitudes, $ \mathsf{AM_{\mathbb{R}}(2QCFA)} $ contains all languages and $ \mathsf{AM_{\dddot{\mathbb{R}}}
(2QCFA)} $ is equivalent to the class of recursive languages\index{recursive languages}\index{language!recursive} \cite{SayY14B}.
Moreover, it was shown that $ \mathsf{AM_{\mathbb{A}}(\mbox{poly-time }2QCFA)}$ contains a language that is not in $ \mathsf{AM_{\mathbb{R}}(\mbox{poly-time }2pfa)} $ \cite{ZQG15}.

Before closing this item, we also refer \cite{VilY14,NisY15} for further results on weaker QFA  verifiers in different set-ups. 

\textbf{Debate systems.} A debate system 
is a generalization of IP system where the verifier interacts with a prover (who tries to convince the verifier that the input $w\in L$) and a refuter 
(who tries to prove that the input $w\notin L$). If $w\in L$, there should be a strategy for the prover such that, regardless of the refuter's strategy, the verifier accepts 
with probability at least $2/3$. If $w\notin L$, the refuter should have a strategy such that, for any prover's strategy, the verifier rejects with probability at least $2/3$.

The debate version of $ \mathsf{AM_{\mathbb{Q}}(2QCFA)} $ has been shown to contain all recursive languages\index{recursive languages}\index{language!recursive} \cite{YSD14A}. In contrast, the debate version of $ \mathsf{AM_{\mathbb{Q}}(2PFA)} $ is a subset of $ \mathsf{NP} $ \cite{Con89}. 

\textbf{Postselection.} 
Postselection\index{postselection} is the ability to discard some outcomes at the end of the computation and to make the decision based on the surviving outcomes (even though these outcomes might be occurring with a very small probability). For example, if we have a QFA with 3 basis states
$\ket{q_1}, \ket{q_2}, \ket{q_3}$, we could discard the $\ket{q_3}$ part of the final state 
of the QFA and make the accept/reject decision based on the part of the final state which
consists of $\ket{q_1}$ and $\ket{q_2}$.

Postselection is not possible physically but is interesting as a thought experiment.  
It has been studied for both quantum circuits \cite{Aa05} and quantum automata \cite{SLF10,YS11B}. It has been shown that 1QFAs (1PFA) with postselection have the same computational power as 1QFAs (1PFAs) with restart.

\textbf{Closed Timelike curves.}\index{closed timelike curves}  
Similar to postselection, closed timelike curves (CTC) are a model which is impossible physically but is interesting as a thought experiment. A CTC is a device which allows to
send information back in time, to previous steps of the computation, as long as this does not result in inconsistencies in the computation. 

In \cite{SY11A,SY12B}, 1QFAs and 1PFAs with capability of sending one classical bit from the end of the computation to the beginning of the computation through a CTC have been examined. Surprisingly, it was shown that such 1QFAs can simulate 1QFAs with postselection, and vice versa, when their transitions are restricted to rational numbers. The same result was obtained also for 1PFAs even for arbitrary transition probabilities.

\textbf{Promise problems.} Promise problems\index{promise problems} are computational tasks where the goal is to separate two languages $L_1, L_2: L_1\cap L_2=\emptyset$ (the automaton must accept all
$w\in L_1$, reject all $w\in L_2$ and is allowed to output any answer for $w\notin L_1\cup L_2$). Promise problems allow to show separations between types of automata which are equivalent in the standard setting of recognizing languages. 

For example, for the case of exact computation\index{exact computation} (no error allowed), \index{error!zero} 1QFAs cannot be more concise than 1PFAs \cite{Kla00}. On the other hand, for promise problems, the superiority of 1QFAs over 1PFAs can be unbounded \cite{AY12}: There exists an infinite family of promise problems which can be solved exactly by tuning transition amplitudes of a two-state MCQFA, while the size of the corresponding classical automata grows to infinity \cite{RY14A,GY14A}. Recently, this result was generalized in \cite{GQZ14A} and \cite{BMP14B} and further succinctness results were given in \cite{ZQGLM13,GaiY15A,GQZ14B,ZGQ14A}. 

Several results about the computational power of QFAs on promise problems have been obtained  in \cite{RY14A,GaiY15A}. For example, there is a binary promise problem solvable by a Las Vegas\index{Las Vegas computation} 1QFA and a unary\index{unary language}\index{language!unary} promise problem solvable by a bounded-error 1QFA, but none of them can be solved by any bounded-error 1PFA. 
(For language recognition, these one-way models are of equal power and recognize exactly REG.) 
Moreover, there is a promise problem solvable by an exact 2QCFA in exponential expected time, but not by any bounded-error sublogarithmic space probabilistic Turing machine. No similar example is known for language recognition. Additionally, in \cite{ZLQG17}, a particular subset of promise problems solvable by one-way classical and quantum models was considered, and certain separation results were obtained.

\textbf{Advice.} In computation with advice\index{advice}, the automaton is provided extra information called {\em advice} which depends on the length of the input $w$ 
but not on the particular $w$. 

Advice is a well known notion in the complexity theory but has not been studied much in the setting of QFAs. The first model was introduced in \cite{Yam12} but was based on KWQFAs\footnote{Also, note that the usage of advice defined in \cite{Yam12} is different than the usual definition for classical finite automata \cite{DH95}.}. As a result, some regular languages were shown not to be recognized by this model, with advice of up to linear size. Recently, this framework was generalized in \cite{KSY13}, which can be a good starting point for studying QFAs with advice.

\textbf{Determining the bias of a coin.} \index{state complexity}
In \cite{AD11}, the state complexities of 1QFAs and 1PFAs were compared for the problem of determining the bias of a coin, if it is known that the coin lands ``heads'' with probability either $ p $ or $ p + \epsilon $ for some known $ p $ and $ \epsilon $. 
A 1QFA can distinguish between the two cases with a number of states that is independent of
$p$ and $\epsilon$ while any bounded-error 1PFA must have $ \Omega\left(\frac{p(1-p)}{\epsilon}\right) $ states \cite{AD11}. Recently, it was also proven that \cite{KinO17} there is no 1QFA having the following property: simultaneously for every $ \epsilon \in \left[ -\frac{1}{2},\frac{1}{2} \right] \setminus \{0\} $, given access to an infinite sequence of coin tosses, if the coin is $ \left( \frac{1}{2} + \epsilon \right) $-biased then the automaton spends at least $ 2/3 $ of its time guessing ``biased'', and if the coin is fair then the automaton spends at least 2/3 of its time guessing ``fair''.

\textbf{Learning theory.} The problem of learning probability distributions produced by QFA sources, i.e. identifying an unknown QFA from examples of its behavior, was studied in  \cite{Jub12}. Information-theoretically, QFAs can be learned  from a polynomial number of examples, similarly to classical hidden Markov models. However, computationally, the problem is as hard as learning noisy parities, a very difficult problem in computational learning theory \cite{BFKL93}.

\section{Concluding remarks } 
\label{sec:conclusion}

Quantum finite automata (QFAs) combine the theory of finite automata with quantum computing. Many different models and aspects of QFAs have been studied and this research topic has recently celebrated its 20 years. 

There are some contexts in which quantum models are of the same power as classical models (for example, language recognition power of 1QFAs with bounded or unbounded error) or
have similar properties as classical models (for example, undecidability of the emptiness problem for 1-way automata). On the other hand, there are many cases in which quantum models are superior to classical models (for example, succinctness results for almost all models, nondeterministic language recognition power, and language recognition power of 2QFAs with bounded or unbounded error). Besides these, there are still many research questions that are still open.

Among restricted one-way QFAs, LaQFAs deserve a special attention. Moreover, it would be interesting to find more examples where QFAs can be substantially smaller than DFAs and PFAs. So far, most examples are periodic languages over unary alphabet (e.g. \cite{AF98,MP02}) or their simple generalizations. This raises a question: for what non-unary languages do QFAs achieve a quantum advantage in a non-trivial way? Investigating the state complexity of ``non-uniform'' QFAs is another interesting direction (see \cite{VilYam15} as an example to measure the state complexity (of the restricted QFA models) by fixing the input length). 

Compared to one-way models, two-way QFA models have not been widely examined and there are many open problems related to them. 
Furthermore, promise problems, interactive proof systems, and computation with advice are new hot topics having connections with computational complexity. Further research on them will likely provide new insights. Another promising direction is connections of QFAs with algebra and using algebraic methods to study the power of QFAs.

\section*{Acknowledgements}

We are grateful to A. C. Cem Say and John Watrous for their helpful comments on the subject matter of this chapter. We would like to thank our anonymous referee for his/her helpful comments and Narad Nampersad for his suggestions to improve the language of the chapter. We also would like to thank Marats Golovkins, Paulo Mateus, Emmanuel Jeandel, Carlo Mereghetti, Farid Ablayev, Daowen Qiu, Jozef Gruska, and James P. Crutchfield for kindly answering our questions.
 
A. Yakary{\i}lmaz would like to sincerely thank his PhD. supervisor A. C. Cem Say for introducing him to the field of quantum computation and for their collaborative work where he has learned a lot and gained a great deal of experience.

A. Ambainis was supported by ERC Advanced Grant MQC and FP7 FET Proactive project QALGO. A. Yakary{\i}lmaz was partially supported by T\"{U}B\.ITAK with grant 108E142, CAPES with grant 88881.030338/2013-01, ERC Advanced Grant MQC, and FP7 FET projects QALGO and QCS.

======================================================
\bibliographystyle{abbrv}
\addcontentsline{toc}{section}{References}
\begin{footnotesize}
  \bibliography{AutomataAndQuantumComputing}
\end{footnotesize}

\newpage{\pagestyle{empty}\cleardoublepage}

\printindex

\end{document}